\documentclass[11pt]{article}
\usepackage[a4paper,margin=1.0in]{geometry}
\usepackage{cite}
\usepackage{amsmath,amssymb,amsfonts,amsthm,mathtools,bm}
\usepackage{graphicx}
\usepackage{subcaption}
\usepackage{makecell}
\usepackage{array}
\usepackage{booktabs}
\usepackage{siunitx}
\usepackage{algorithm}
\usepackage{algpseudocode}
\usepackage{hyperref}
\usepackage{cleveref}
\graphicspath{{./Figs/}}
\usepackage{float}
\usepackage{placeins}

\newtheorem{theorem}{Theorem}[section]
\newtheorem{remark}[theorem]{Remark}

\newcommand{\sat}{\operatorname{sat}}
\newcommand{\rlim}{\operatorname{rlim}}
\DeclareMathOperator{\rate}{rate}
\DeclareMathOperator{\sign}{sign}

\title{Sampled-data Robust Control of Electrically Stimulated Engineered Cell Factories}

\author{%
Papri Dey\textsuperscript{1},
Ksenia Zlobina\textsuperscript{1},
Nicholas A. Rondoni\textsuperscript{1},
Marcella M. Gomez\textsuperscript{1}\\[2mm]
\textsuperscript{1}Applied Mathematics, Baskin School of Engineering, University of California, Santa Cruz
\thanks{Corresponding author pdey@ucsc.edu}
}
\date{}
\begin{document}
\maketitle

\begin{abstract}
Closed-loop bioelectronic regulation of engineered secretory cell systems is challenging because electric-field (EF) stimulation acts indirectly through transcription-factor activation, in the presence of delayed, nonlinear, and noisy intracellular dynamics, sparse measurements, and constrained burst-based actuation. We develop a framework for robust closed-loop endocrine regulation in electrically stimulated engineered cell factories, illustrated through extracellular thyroid hormone \(T_4\) production in engineered thyroid-like cells. The plant is modeled by a control-oriented ODE formulation combining a reduced mechanistic \(T_4\) pathway, an EF-responsive Hill module, and a linear-chain Erlang cascade representing distributed intracellular delay. On this basis, we design a sampled-data adaptive proportional-integral-derivative (PID) controller with derivative filtering, anti-windup, saturation and rate limits, and hysteretic band-locking, together with a robust adaptive extension that accounts for parameter mismatch, sensor noise and bias, actuator mismatch, delay/jitter, and exogenous rhythmic disturbance through a scenario-based risk-aware update. We provide local sampled-data input-to-state stability interpretations for both APID and RAPID, showing that, under standard local Lyapunov and bounded-disturbance conditions, the sampled tracking error is ultimately bounded by a disturbance-dependent constant. In silico experiments demonstrate sustained regulation of extracellular \(T_4\) across prescribed targets despite significant uncertainty.
\end{abstract}

\section{Introduction}
Advances in synthetic biology and bioelectronic medicine make it increasingly natural to view engineered cellular systems as controllable production platforms rather than only as biological models \cite{Famm2013,Jafari2021FCArchitecture,Jafari2023Merging,Koutsouras2024,Selberg2020MLDriven}. In such systems, an external stimulus must be converted into a desired biological output despite nonlinear dynamics, delayed intracellular response, sparse sensing, and actuation constraints \cite{DelVecchio2018,Qian2018}. These features make closed-loop regulation essential for therapeutic and bioelectronic applications \cite{Olofsson2015,Preetam2024ElectricalWounds,Rolandi2019Electroceuticals}.

The control problem studied here is the regulation of extracellular thyroxine \(T_4^{\mathrm{ext}}\) in an electrically stimulated engineered thyroid-like cell factory. The motivating application is a bioelectronic implant for hypothyroidism, in which an onboard controller modulates EF stimulation to regulate hormone output. EF does not act directly on secretion, but through promoter activation and transcription-factor-mediated pathways, so the input-output dynamics are delayed, nonlinear, and uncertain \cite{Purvis2013}. In addition, stimulation is delivered through burst-based pulse hardware rather than as an arbitrary continuous signal, while extracellular measurements are sparse and may be corrupted by noise and bias.

To represent this setting, we develop a control-oriented \(16\)-state ODE model that combines a reduced thyroid hormone pathway with a data-informed EF-response module. The latter is modeled by a Hill-type activation law \cite{Alon2006} and an \(N\)-stage linear-chain cascade, yielding a distributed-delay approximation of intracellular response \cite{MacDonald1978TimeLags,HurtadoKirosingh2019GLCT}. We further incorporate a realistic burst-based actuator representation and adopt a sampled-data viewpoint: the plant evolves continuously, while sensing and control updates occur only at coarser time instants \cite{Astrom2010}. Open-loop simulations show that fixed EF amplitudes shift the operating range of extracellular \(T_4\) but do not robustly regulate it to a desired target.

To address this challenge, we design a sampled-data adaptive PID controller (APID) that updates once per stimulation window using sampled measurements. The controller includes derivative filtering, anti-windup correction, saturation and rate limits, and a hysteretic band-lock mechanism that maintains a learned basal stimulation level once the output approaches the target. We then develop a robust adaptive extension (RAPID) that accounts for simultaneous parameter mismatch, measurement noise and bias, actuator mismatch, delay/jitter in EF delivery, and exogenous rhythmic disturbance through a scenario-based mean-conditional value-at-risk (CVaR)  update. The resulting architecture is aimed at practical constrained regulation rather than unconstrained nominal tracking.

The choice of PID as the underlying controller structure is deliberate. Recent bioelectronic control studies have explored machine-learning and RBF-neural-network-based feedback architectures \cite{Zlobina2022RoleMLPrecisionMedicine,Marquez2025CellMigratoryPatterns}. These approaches offer higher expressive capacity, but they make it harder to isolate the effects of sampled sensing, burst-based EF delivery, and practical safeguards. In contrast, PID provides a transparent fixed-structure benchmark: the proportional, integral, and derivative actions are retained, and only the three gains \((K_p,K_i,K_d)\) are adapted online through a model-assisted one-window predictive update.

The contribution of this work does not lie in introducing PID control, derivative filtering, anti-windup mechanisms, or risk-aware optimization as isolated concepts, as these are already well established \cite{Astrom2006,AstromHagglund2001,Visioli2006}. Rather, its novelty lies in their integration within a single sampled-data feedback architecture specifically designed for delayed EF-driven endocrine regulation under burst-based actuation, sparse sensing, and multi-class uncertainty. In addition, we give local sampled-data input-to-state stability (ISS) interpretations for APID and RAPID, including the effect of derivative-on-measurement and, for RAPID, the descent structure of the scenario-based mean-CVaR gain update.

The remainder of the paper is organized as follows. \Cref{sec:background} develops the qualitative thyroid-cell model. \Cref{sec:actuator} introduces the burst-averaged EF actuator and open-loop analysis. \Cref{sec:adaptive_PID} presents APID, its local ISS interpretation, and nominal closed-loop results. \Cref{sec:rapid} develops RAPID, gives a local risk-aware sampled-data ISS interpretation, and compares its performance with APID under uncertainty. \Cref{sec:conclusion} concludes.

\section{Qualitative Model of Engineered Thyroid Cells}
\label{sec:background}
This section describes the engineered thyroid-cell system that serves as the biological plant in the proposed feedback architecture. The motivating application is a bioelectronic device that combines a sensor for extracellular \(T_4\), an EF actuator for stimulation, and a controller that uses the measured hormone level to regulate production toward a prescribed target. In this setting, the engineered cell culture is the hormone-producing subsystem, while the feedback loop couples sensing and actuation through the controller. We therefore begin with the biological pathway underlying \(T_4\) production and then derive a qualitative model suitable for control design. A system-level view of this sensing-actuation-control architecture is shown in Figure~\ref{fig:high_level_device}, which places the engineered thyroid-cell model in the broader context of the proposed implantable bioelectronic regulation platform.

\begin{figure}[t]
\centering
\includegraphics[width=0.9\linewidth]{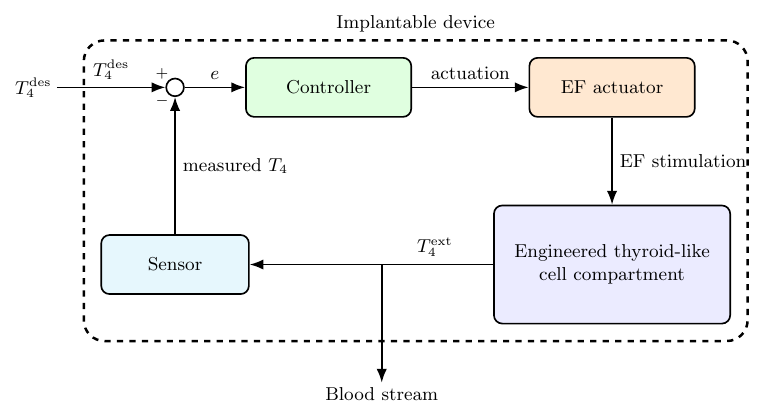}
\caption{System-level view of the proposed closed-loop thyroid-hormone regulation platform.}
\label{fig:high_level_device}
\end{figure}

\subsection{Application Setting and Biological Motivation}
\noindent \textbf{Biological basis of thyroid hormone production in engineered cultures:}
It has been shown that in vitro thyroid cell cultures can self-organize into organoid structures resembling thyroid follicles. This follicular architecture is essential for thyroxine ($T_4$) synthesis, as production depends on coordinated processes occurring across three distinct compartments: the follicle lumen, the surrounding cells, and the interstitial space between follicles \cite{Ogundipe2021,Liang2022}.

These processes are similar to the natural thyroid processes and are formalized in the schematic shown in Figure \ref{fig:FBNNs}. This schematic compresses the detailed thyroid-cell graphic into a three-compartment reaction map: region I represents the interstitial space in contact with blood (right, pink), region II the thyrocyte (middle, blue), and region III the follicular lumen/colloid (left, light pink). The numbers in Figure \ref{fig:FBNNs}) correspond to the following processes: thyroglobulin is secreted by the thyroid follicular cell (1) and released into the follicle colloid via exocytosis (2), where it binds iodine (3). Iodide enters the thyrocyte from blood through the Na/I transporter (4) and moves into the colloid via the pendrin transporter (5), where it is oxidized to iodine by thyroid peroxidase (6). The iodinated thyroglobulin is conjugated (7) and reabsorbed by endocytosis (8), then proteolytically processed to generate $T_4$ (9). $T_4$ can be converted to T3 by deiodinases (10), and both hormones are exported to blood via MCT transporters (11–14).

\begin{figure*}[ht!]
\centering
\includegraphics[width=0.75\linewidth]{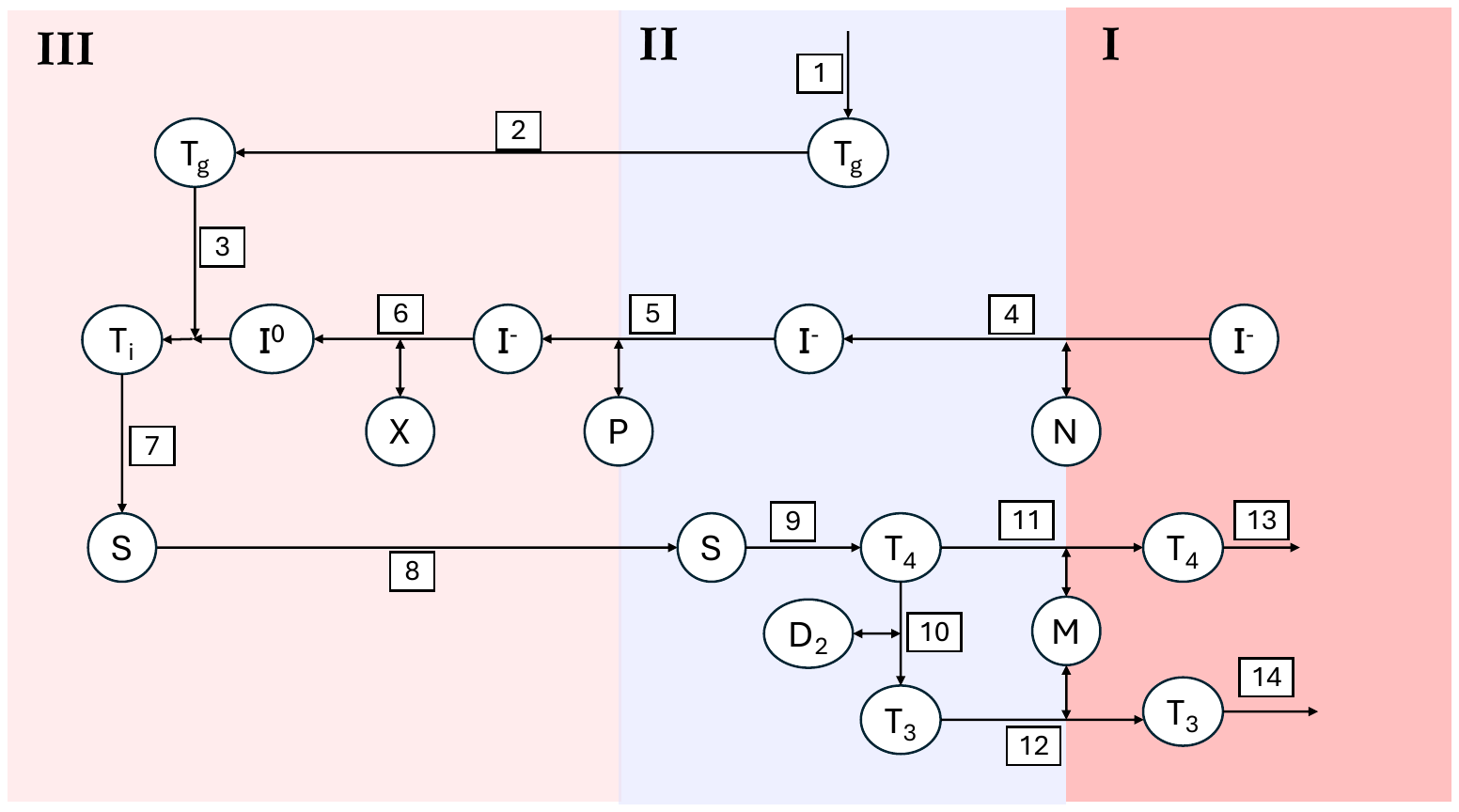}
\caption{Schematic of $T_4$ production pathways in the engineered cell culture. Tg – thyroglobulin, I – iodide, I0 – iodine, N – Na/I transporter. P – pendrin, X – thyroid peroxidase, Ti – iodinated thyroglobulin, S – conjugated complex, $T_4$ – thyroxine, T3 – triiodothyronine, D2 – deiodinase 2, M – MCT8 transporter.}
\label{fig:FBNNs}
\end{figure*}
It is important to identify which of these reactions represent the most suitable targets for therapeutic intervention. Some targets are relatively simple, for example, modulating iodine levels through dietary supplements, but such interventions have limited impact. More complex regulatory points can be accessed through modern genetic engineering. For instance, reaction 9, commonly referred to as ``proteolysis'' consists of multiple steps regulated by proteins expressed in thyroid follicular cells, and modulation of specific transcription factors can influence these processes. Thus, the ``proteolysis'' section of the reaction scheme represents a particularly promising target for therapeutic intervention or for intervention with a bioelectronic device.

\subsection{Reduced Scheme of \texorpdfstring{\(T_4\)}{} Production and Derivation of the Equations}
\label{sec:model}
We begin by constructing a control-oriented reduced model for \(T_4\) production based on its natural biochemical pathway. Using a chemical reaction network formulation, we derive a reduced mechanistic model that captures the core production, transport, and degradation processes relevant to extracellular hormone release. We then complement this mechanistic structure with data-informed parameter calibration so that the resulting model reflects the effective system time scales and delayed responses observed experimentally.
For control design, the full reaction scheme must be reduced to preserve the main actuation and output pathways while omitting biological details that do not materially affect controllability. Accordingly, reactions that do not directly influence the control input or measured output are merged into composite processes, yielding the simplified scheme shown in Figure~\ref{fig:model1}. This reduced representation retains the essential structure needed for model-based actuation and feedback design.
The simplified diagram (Figure \ref{fig:model1}) corresponds to the reactions in Figure \ref{fig:FBNNs} as follows. Reactions 1 and 2, representing the constant influx of thyroglobulin, are captured by reaction $\alpha_{T_g}$. Reactions 4–6, representing iodine influx, are combined as reaction $\alpha_{I}$. Reactions 3 and 7–9, describing thyroglobulin iodination and its stepwise conversion to $T_4$, are consolidated into reaction $k_1$, which constitutes the key regulatory node influenced at the transcriptional level. Consequently, $k_1$ represents the most effective point of intervention for a bioelectronic device.
Reaction 11, corresponding to $T_4$ release from the thyroid follicle into the bloodstream, is represented as MCT8 in the simplified diagram. Reactions 10, 12, and 14, which mediate $T_3$ production and release, are omitted under the assumption that $T_3$ and $T_4$ levels remain naturally balanced.
Accordingly, the mathematical model of the simplified system comprises the following reactions.

Thyroglobulin and iodine supply:
\begin{align*}
     \emptyset &\overset{\alpha_{T_g}}{\longrightarrow} T_g\\
    \emptyset &\overset{\alpha_I}{\longrightarrow} I
\end{align*}
$T_4$ production out of thyroglobulin and iodine:
\begin{align*}
&T_g +I \overset{k_1}{\longrightarrow} T_4^{int}
\end{align*}

Release of $T_4$ to blood flow - through MCT8 transporter:
\begin{align*}
&T_4^{int} \overset \omega {\longrightarrow} T_4^{ext}
\end{align*}

All substances participating in the process naturally degrade:
\begin{align*}
& T_g \overset{\gamma_{T_g}}{\longrightarrow} \emptyset\ \\
& I \overset{\gamma_I}{\longrightarrow} \emptyset\\
& T_4^{int} \overset{\gamma_{T_4}}{\longrightarrow} \emptyset \\
&T_4^{ext} \overset{\gamma_{T_4}}{\longrightarrow} \emptyset
\end{align*}
The overall reduced model of T\textsubscript{4} production in the engineered cell factory is represented by a system of four ordinary differential equations:
\begin{align} \label{eq:mechanistic_model}
    \begin{cases}
    \dot{I} &= -k_1 T_g I - \gamma_I I + \alpha_I, \\
    \dot{T_4^{\mathrm{int}}} &= k_1 T_g I - \omega T_4^{int} - \gamma_{T_4}^{\mathrm{int}}T_4^{int}, \\
    \dot{T_g} &= \alpha_{T_g} - k_1T_g I - \gamma_{T_g}T_g, \\
    \dot{T_4^{\mathrm{ext}}} &= \omega T_4^{int} -\gamma_{T_4}^{\mathrm{ext}}(t)T_4^{\mathrm{ext}} \\
    \end{cases}
\end{align}
    
\begin{figure}[htbp]
  \centering
  \begin{minipage}[t]{0.48\textwidth}
    \vspace{0pt}
    \includegraphics[width=\linewidth]{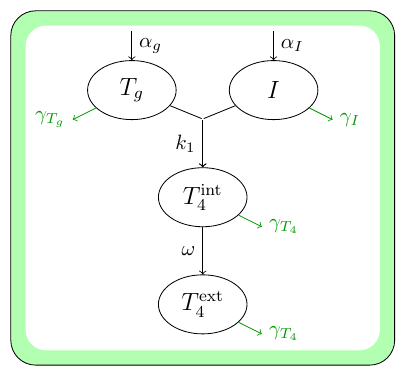}
  \end{minipage}
  \caption{ $T_4$ production pathway in engineered cell culture: schematic of the reduced model.}
  \label{fig:model1}
\end{figure}

\subsection{Effect of Electric Field on \texorpdfstring{$T_4$}{} production Model}
In the reduced model, $T_4$ production is represented as a single-step reaction between thyroglobulin and iodide. While this simplification captures the overall output, we know that in reality, the process involves multiple sequential stages, each regulated by various factors, including proteins whose roles are not yet fully elucidated. Advances in genetic engineering allow the integration of electric-field (EF)-responsive promoters into different genes, including those controlling these multi-stage reactions\cite{MansouriFussenegger2022}. One candidate is the transcription factor NKX2.1, whose importance in $T_4$ production has been established, although the precise mechanisms remain unclear \cite{Kusakabe2006,CarmonaHidalgo2024}. In our model, we assume that EF stimulation acts at the transcriptional level, initiating a cascade of molecular events that ultimately modulates the rate of $T_4$ synthesis from thyroglobulin and iodide (see Figure \ref{fig:cascade_scheme}). This provides a biologically plausible bridge between EF inputs and hormone output, which we formalize in the following ODE framework.

\begin{equation}
\begin{aligned}
\dot{u}_f    &= \alpha_1(t)\,\frac{EF(t)^n}{K^n + EF(t)^n} - \gamma_1\,u_f,\\
\dot{u}_1    &= a\,(u_f - u_1),\\
\dot{u}_i    &= a\,(u_{i-1} - u_i), \quad i=2,\dots,10,\\
\dot{u}_{NK} &= \alpha\, u_{10} - \beta\, u_{NK} + c.
\end{aligned}
\end{equation}
where $u_f$ represents the EF-induced promoter drive, modeled as a low-pass Hill function capturing the nonlinear activation of transcription by the applied electric field. The variables $u_1, \dots, u_{10}$ correspond to the stages of the N-stage linear (Erlang) cascade, which approximates the distributed delay in the cellular response due to transcription, translation, and intermediate regulatory processes \cite{Gomez2016StochasticDelay}. Finally, $u_{NK}$ denotes the cumulative effect of the entire cascade, the effective driver of $T_4$ production integrating the delayed dynamics of all upstream stages. Parameters $\alpha_1$, $K$, $n$, and $\gamma_1$ define the Hill nonlinearity, while $a$ sets the rate of the linear-chain cascade, $\alpha$ and $\beta$ are the synthesis and degradation rates of the cascade output, and $c$ represents any baseline activity independent of the EF input.

The output variable $u_{NK}$ modulates the kinetic parameter $k_1$ according to a Hill-type relation:
\[
k_1(u_{NK}) = \kappa \;\frac{\left(\frac{u_{NK}}{K_2}\right)^h}{1 + \left(\frac{u_{NK}}{K_2}\right)^h},
\]
where $\kappa$ is the maximal rate, $K_2$ is the half-activation constant, and $h$ is the Hill coefficient. This formulation captures how the cumulative effect of the EF-driven cascade influences the $T_4$ production pathway in a nonlinear, saturable manner.

\begin{figure*}[ht!]
\centering
\includegraphics[scale= 1.2]{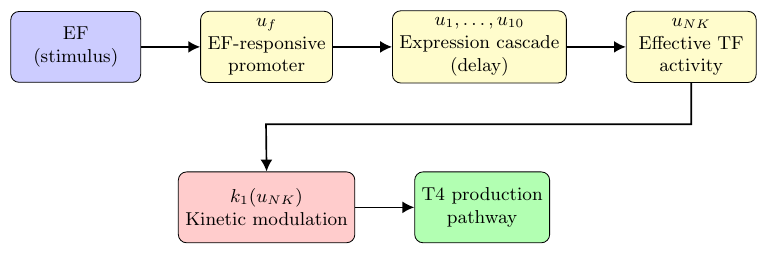}
\caption{EF-driven $T_4$ production cascade in engineered thyroid cells.  Electric field (EF) affects the synthetic EF-responsive promoter ($u_f$) and the multi-stage gene expression cascade ($u_1,\dots,u_{10}$) leading to effective transcription factor activity ($u_{NK}$). It modulates the kinetic parameter ($k_1$) and ultimately controls $T_4$ synthesis.}
\label{fig:cascade_scheme}
\end{figure*}
This cumulative cascade introduces a natural total delay in the system, often referred to as the translational or effective production delay, the time required to produce a functional protein from mRNA. This delay arises from multiple processes, including translation elongation, protein folding, and intermediate regulatory steps \cite{Gomez2016StochasticDelay}. To capture this in the mechanistic model, we implement a linear-chain (Erlang) delay appended to the upstream promoter input. The $N$-stage cascade approximates a gamma-distributed delay, with transfer function $H_d(s)=(a/(s+a))^{N}$ mean delay $N/a$, standard deviation $\sigma = \sqrt{N}/a$, and coefficient of variation $1/\sqrt{N}$. The corresponding impulse response is Erlang/Gamma distributed, yielding a biologically realistic representation of the distributed cellular response \cite{HurtadoKirosingh2019GLCT}.

\subsection{Qualitative Sixteen-State ODE Model: Integrating \texorpdfstring{$T_4$}{} Core Dynamics with EF-Induced Promoter Cascade}
\label{sec:16-state_model}
To describe the EF-driven $T_4$ production in the engineered thyroid cells, we combine the mechanistic (first-principle) model of the thyroid core with a data-driven representation of the delayed promoter cascade. This results in a sixteen-state ODE system, capturing both the upstream EF-induced transcriptional cascade and the downstream $T_4$ production dynamics.

We define the state vector as:
\[
y(t)=\big(I,T_4^{\mathrm{int}},T_g,T_4^{\mathrm{ext}},\ u_f,\ u_1,u_2,\dots,u_{10},u_{NK}\big)^\top,
\]
where $u_f$ represents the EF-responsive promoter, $u_1, \dots, u_{10}$ are the stages of the linear-chain cascade, $u_{NK}$ denotes the effective transcription factor activity, and the remaining states correspond to key biochemical species in the thyroid core.

This combined model integrates the mechanistic dynamics of $T_4$ synthesis with the distributed delays introduced by the upstream gene expression cascade. It provides a framework for simulating the system’s response to arbitrary EF inputs:

\begin{align} \label{governing_equation}
\begin{cases}
\dot{u}_f \; &=\; \alpha_1(t)\,\frac{EF(t)^n}{K^n + EF(t)^n} \;-\; \gamma_1\,u_f \\
\dot{u}_1 &= a\,(u_f - u_1), \\
\dot{u}_i &= a\,(u_{i-1} - u_i), \qquad i=2,\dots,10 \\
\dot{u}_{NK} &= \alpha\,u_{10} \;-\; \beta\,u_{NK} \;+\; c, \\
k_1(u_{NK})  &= \kappa\;\frac{\big(\tfrac{u_{NK}}{K_2}\big)^{h}}{1+\big(\tfrac{u_{NK}}{K_2}\big)^{h}} \\
\dot{I} &= -k_1(u_{NK}) T_g I - \gamma_I I + \alpha_I, \\
\dot{T_4^{\mathrm{int}}} &= k_1(u_{NK}) T_g I - \omega T_4^{int} - \gamma_{T_4}^{\mathrm{int}}T_4^{int}, \\
\dot{T_g} &= \alpha_{T_g} - k_1(u_{NK})T_g I - \gamma_{T_g}T_g, \\
\dot{T_4^{\mathrm{ext}}} &= \omega T_4^{int} -\gamma_{T_4}^{\mathrm{ext}}T_4^{\mathrm{ext}} 
\end{cases}
\end{align}
This system of equations is used as a surrogate for the biological system in subsequent controller testing. The parameter values employed in the simulations are listed in \Cref{tab:nominal_parameters}. The parameters were selected to reproduce the characteristic slow dynamics of thyroid hormone production; however, the model is not intended to provide quantitatively accurate physiological estimates, but rather to serve as a benchmark for controller evaluation.
\begin{table}[t]
\caption{Nominal plant parameters used in simulation.}
\label{tab:nominal_parameters}
\centering
\normalsize
\setlength{\tabcolsep}{4pt}
\begin{tabular}{lcccccccccc}
\toprule
EF cascade & $\alpha_1$ & $K$ & $n$ & $\gamma_1$ & $N$ & $a$ & $\alpha$ & $\beta$ & $c$ \\
Value & 3.308 & 1.195 & 2 & 0.0296 & 10 & 0.001 & 168.7624 & 0.01 & 0.25 \\
\midrule
Core model & $\kappa$ & $K_2$ & $h$ & $\alpha_I$ & $\gamma_I$ & $\alpha_{T_g}$ & $\gamma_{T_g}$ & $\gamma_{T_4}^{\rm int}$ & $\omega$ & $\gamma_{T_4}^{\rm ext}$ \\
Value & 1 & 1000 & 3 & 1 & 0.004 & 1 & 0.04 & 0.1504 & 0.15 & 0.01 \\
\bottomrule
\end{tabular}
\end{table}

\section{Actuator model: hardware-constrained EF delivery}
\label{sec:actuator}
In practice, EF cannot be applied as an arbitrary continuous signal. Most experimental and clinical stimulators deliver pulse-based waveforms with prescribed timing structure \cite{Steinhardt2024Pulsatile,Li2017Biphasic}. We therefore represent the actuator through a windowed burst-averaged input.

\subsection{Electric-field actuator: windowed burst-averaged representation}
The external electric field is represented as
\[
EF(x,t)=A_m\phi(x)s_m(t),\qquad t\in[t_m,t_m+T_s),
\]
where \(\phi(x)\) is a fixed spatial profile, \(A_m\) is the windowwise stimulation amplitude, \(s_m(t)\) describes the temporal switching pattern, and \(T_s\) is the control-window length. Within each window, stimulation is delivered through bursts of high-frequency micro-pulses. Because the biochemical dynamics evolve much more slowly than the micro-pulses, the plant is driven by a burst-averaged effective input rather than by the raw waveform.

Let \(T_p=t_1+t_2+t_3+t_4\) be the micro-pulse period and \(\delta_{\rm pulse}=(t_1+t_3)/T_p\) the pulse duty factor. If one burst contains \(n_p\) pulses, its active duration is \(t_{5b}=n_pT_p\). With inter-burst gap \(t_{\rm gap}\), define \(T_b=t_{5b}+t_{\rm gap}\) and choose
\[
n_b=\left\lfloor\frac{T_s-t_{6,\min}}{T_b}\right\rfloor_+,
\qquad T_{\rm on}=n_bT_b,
\qquad t_6=T_s-T_{\rm on}.
\]
Writing \(s=t-t_m\), the scalar effective input entering the ODE model is
\begin{equation}
\label{eq:ef_eff}
EF_{\rm eff}(t)=A_m\delta_{\rm pulse}
\sum_{j=0}^{n_b-1}{\bf 1}_{[jT_b,jT_b+t_{5b})}(s).
\end{equation}
where
\[
\mathbf{1}_{[a,b)}(s)=
\begin{cases}
1, & s\in[a,b),\\[4pt]
0, & s\notin[a,b).
\end{cases}
\]
Thus, the actuator is a sparse, piecewise-constant burst signal that equals \(A_m\delta_{\rm pulse}\) during active burst intervals and zero during off phases reserved for measurement and recovery.

The per-window amplitude is obtained from the controller command \(u_m\) by the thresholded affine map
\begin{equation}
\label{eq:clip}
\begin{aligned}
&A_m =
\begin{cases}
0,  \hspace{1.6in} u_m < \mathrm{thr},\\
\operatorname{sat}_{[A_{\min},A_{\max}]}
\!\left(k_A(u_m-\mathrm{thr})\right),
 u_m \ge \mathrm{thr},
\end{cases} \\
&\operatorname{sat}_{[a,b]}(x)
:= \min\{b,\max\{a,x\}\}.
\end{aligned}
\end{equation}
where $\mathrm{thr}$ is the activation threshold on the controller command $u_m$, and $A_{\min}$, $A_{\max}$ are lower and upper bounds on the stimulation amplitude $A_m$ respectively. The actuator is represented in the qualitative model through an effective, burst-averaged input, shown in \Cref{fig:am_vs_um_two_cases} and \Cref{fig:signal_expl}.
\begin{figure}[t]
\centering
\includegraphics[]{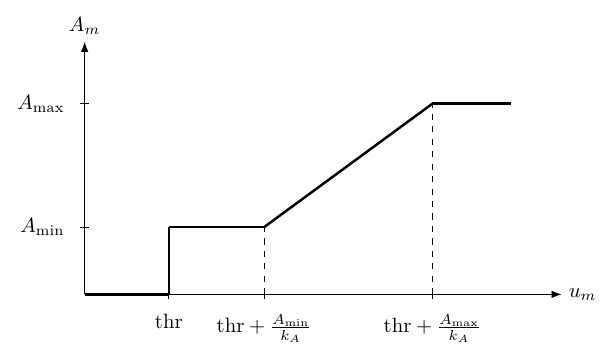}
\caption{Schematic of the amplitude map $A_m$ versus the controller command $u_m$. The two parameter orderings $A_{\min} < \mathrm{thr} < A_{\max}$ and $\mathrm{thr}< A_{\min}< A_{\max}$ are qualitatively similar, though not exactly identical: in both cases, the map is zero below $\mathrm{thr}$, then rises with slope $k_A$, and finally saturates at $A_{\max}$.}
\label{fig:am_vs_um_two_cases}
\end{figure}

\begin{figure}
    \centering
    \includegraphics[width=0.9\linewidth]{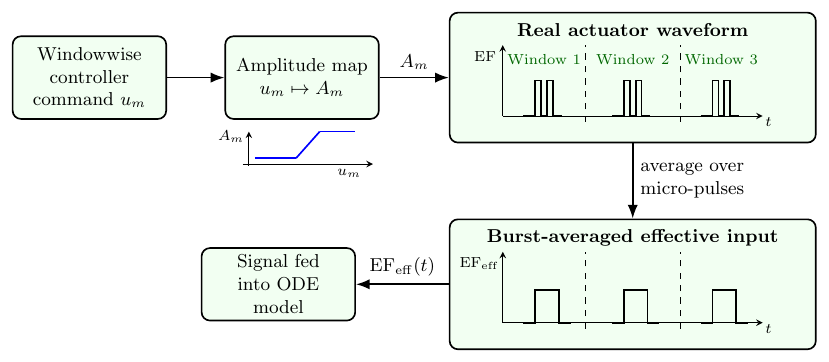}
    \caption{Hardware-to-model actuator mapping. A windowwise command \(u_m\) is converted to an amplitude \(A_m\), delivered physically through burst trains of micro-pulses, and represented in the ODE model by the burst-averaged effective input \(\mathrm{EF}_{\mathrm{eff}}(t)\).}
    \label{fig:signal_expl}
\end{figure}
Additionally, \Cref{fig:ef-supplied-zoom} illustrates the actuator actually seen by the qualitative model on a representative subset of the time horizon. The signal is sparse in time, with clear alternation between active burst epochs and zero-input intervals, and its magnitude increases stepwise from one window to the next as the controller raises the applied amplitude. This representation preserves the experimentally relevant timing structure of stimulation while remaining tractable for simulation and feedback design.

\begin{figure}[htbp]
    \centering
        \includegraphics[width=\linewidth]{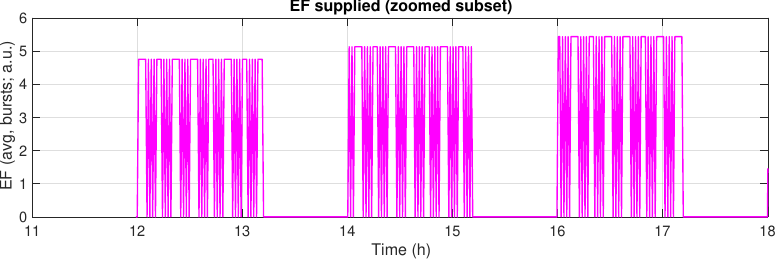}
        \caption{Burst-averaged EF supplied to the plant over a zoomed time interval. Narrow pulse blocks denote active stimulation windows; zero intervals denote off phases for measurement and recovery.}
        \label{fig:ef-supplied-zoom}
\end{figure}
The burst-based EF structure is treated primarily as a hardware constraint rather than as the source of biological delay: the stimulator delivers input through prescribed windows, bursts, and micro-pulses, whereas the dominant delay in hormone regulation arises from intracellular transcriptional and translational processes, modeled separately through the EF-responsive promoter module and the linear-chain/Erlang cascade. 
A numerical time-resolution study of this actuator averaging is presented in \Cref{subsec:time-resolution} after the closed-loop PID methodology.
\subsection{Open-loop analysis and sensitivity to EF amplitude}

We first examine the open-loop input-output behavior of the 16-state plant under fixed per-window EF amplitudes. In this study, the burst-averaged actuator signal is propagated through the model with constant windowwise amplitude
\[
A \in \{0,\;0.005,\;0.01,\;0.015,\;0.02\},
\]
and the resulting trajectories of \(T_4^{\mathrm{ext}}(t)\) are simulated over an \(80\)-hour horizon; see \Cref{fig:openloop_fixed_amp}. The responses show a clear amplitude dependence: increasing \(A\) shifts the output to a higher oscillatory regime, with \(A=0\) remaining near the basal level and larger amplitudes producing substantially elevated hormone levels. However, the effect is not merely a smooth vertical shift. The transient rise, oscillation magnitude, and long-time operating level all change noticeably with relatively small changes in \(A\), indicating that the output is quite sensitive to the EF amplitude, shown in \Cref{fig:openloop_fixed_amp}.

\begin{figure}[t]
    \centering
    \includegraphics[width=0.8\linewidth]{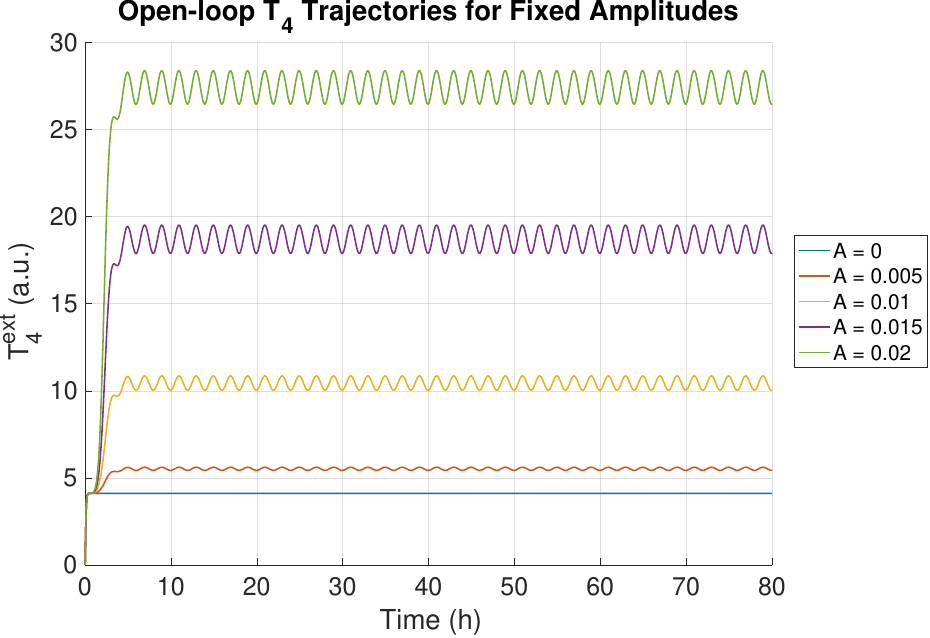}
    \caption{Open-loop trajectories of \(T_4^{\mathrm{ext}}(t)\) under fixed EF amplitudes \(A \in \{0,0.005,0.01,0.015,0.02\}\). Increasing amplitude shifts the response to higher oscillatory regimes, while also changing the transient rise and oscillation magnitude, highlighting the sensitivity of the output to EF amplitude.}
    \label{fig:openloop_fixed_amp}
\end{figure}

\noindent This sensitivity is important from a control standpoint. Small amplitudes produce only modest elevation above baseline, whereas slightly larger amplitudes can lead to much faster growth and more pronounced oscillations. Thus, even before introducing uncertainty or measurement limitations, the plant already exhibits a nontrivial amplitude-response dependence: the same actuator parameter that raises the mean hormone level also changes the transient and oscillatory characteristics of the response. In this sense, the open-loop study identifies not only the rough actuation range of the system, but also the sensitivity that any feedback law must handle. The controller developed in the next section is therefore tasked with regulating a delayed nonlinear system whose output can change appreciably under comparatively small variations in the applied EF amplitude.
\section{Adaptive PID (APID) Control Design}
\label{sec:adaptive_PID}
\Cref{sec:adaptive_PID} develops the sampled-data adaptive PID controller used for closed-loop regulation of \(T_4^{\mathrm{ext}}\). The controller acts once per window, maps the windowwise EF command to the actuator amplitude defined in \Cref{sec:actuator}, updates the PID gains through a one-window predictive cost, and switches to a hysteretic basal holding law after target entry. In this way, the section connects the actuator representation of \Cref{sec:actuator} with a practically implementable feedback rule for delayed endocrine regulation. 

\noindent The APID implementation is summarized in \Cref{alg:apid_compact} and schematically in \Cref{fig:apid}.
The algorithm combines sampled-data PID feedback, finite-difference retuning of
\(\phi=[K_p,K_i,K_d]^\top\), actuator saturation/rate limits, and a hysteretic
band-lock holding mode after target entry.

\begin{figure}[htbp]
\centering
\includegraphics[width=\linewidth]{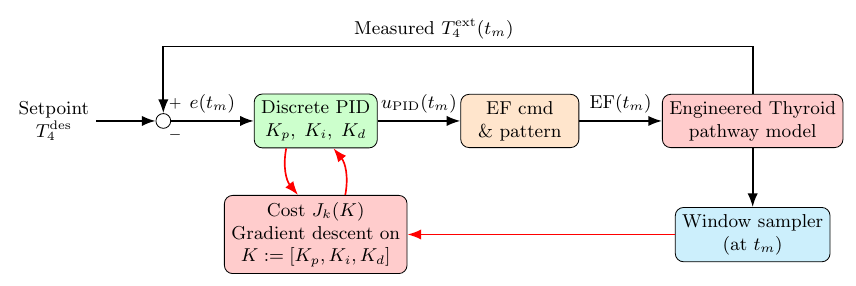}
\caption{High-level schematic of the APID controller. The feedback loop generates the EF command, while the adaptation loop updates the PID gains from a one-window predictive cost.}
\label{fig:apid}
\end{figure}
We now introduce the sampled-data adaptive PID controller used for closed-loop regulation of extracellular thyroid hormone. The control objective is to drive
\[
z(t)=T_4^{\mathrm{ext}}(t)
\]
toward a prescribed constant setpoint \(T_4^{\mathrm{des}}\) by adjusting the windowwise EF actuation described in \Cref{sec:actuator}. 
\subsection{Control objective and sampled-data architecture}
\label{subsec:architecture}

The control objective is to regulate the extracellular hormone output
\[
z(t)=T_4^{\mathrm{ext}}(t)=y_4(t)
\]
toward the constant target \(T_4^{\mathrm{des}}\). The plant evolves continuously, while the controller updates only at the window boundaries
\[
t_m=t_0+mT_s,\qquad m=0,1,2,\dots.
\]
At the \(m\)-th update, the sampled output is
\[
z_m:=z(t_m)=T_4^{\mathrm{ext}}(t_m),
\]
and the sampled tracking error is defined by
\[
e_m:=T_4^{\mathrm{des}}-z_m.
\]

\noindent The controller maintains an integral state
\[
I_m=I_{m-1}+e_mT_s
\]
and a filtered derivative-on-measurement term
\[
d_m=\frac{z_m-z_{m-1}}{T_s},\qquad
d_m^{(f)}=(1-\alpha_d)d_{m-1}^{(f)}+\alpha_d d_m.
\]
\noindent The windowwise proportional, integral, and derivative gains are collected into
\[
\phi_m=
\begin{bmatrix}
K_p^{(m)} & K_i^{(m)} & K_d^{(m)}
\end{bmatrix}^{\!\top}.
\]

\noindent Given \(\phi_m\), the unsaturated PID command is formed as
\[
u_{\mathrm{PID},m}:=EF_m^{\mathrm{unsat}}
=
K_p^{(m)}e_m
+
K_i^{(m)}I_m
-
K_d^{(m)}d_m^{(f)}.
\]
Let $EF_m$ denote the EF command currently applied over the window  $[t_m,t_m+T_s)$, with $EF_0$ prescribed at initialization. The PID law computes a new unconstrained command $EF_m^{\mathrm{unsat}}$, and the next applied command $EF_{m+1}$ is obtained by rate limiting and saturation relative to the current command $EF_m$ as follows. 
Define
\[
\rlim_{\Delta}(x,x_{\rm prev})
:=x_{\rm prev}+\sat_{[-\Delta,\Delta]}(x-x_{\rm prev}).
\]
The next EF command is
\begin{equation}
\label{eq:rate-limit}
EF_{m+1}=\sat_{[EF_{\min},EF_{\max}]}
\left(\rlim_{\Delta EF_{\max}}(EF_m^{\rm unsat},EF_m)\right).
\end{equation}
where $EF_{\min}$ and $EF_{\max}$ denote the admissible lower and upper bounds on the windowwise EF command, while $\Delta EF_{\max}$ is the per-window rate-limit bound.
The command is subsequently mapped to \(A_{m+1}\) using \eqref{eq:clip}, together with amplitude rate limiting and bounds,
so that the command can change by at most $\Delta EF_{\max}$ units between consecutive control updates.
Thus, the controller acts only through the scalar windowwise quantities \(EF_m\) and \(A_m\), while the plant receives the corresponding burst-averaged EF signal within each control window.

\noindent The PID gains $\phi=\begin{bmatrix}
K_p & K_i & K_d \end{bmatrix}^{\!\top}$ are not kept fixed. 
For a candidate gain vector \(\phi=[K_p,K_i,K_d]^\top\), the one-window predictive
cost is evaluated by applying the corresponding sampled PID update from the
current sampled state and propagating the model over \([t_m,t_m+T_s]\). If
\(z_\phi(t)\) denotes the resulting predicted output and \(A_\phi\) the trial
amplitude, we use
\begin{equation*}
\label{eq:one_window_cost}
J_m(\phi)
=
\int_{t_m}^{t_m+T_s} 
\beta_{\mathrm{ov}}[e_\phi(t)]_+^2+[e_\phi(t)]_-^2 dt+\rho_A A_\phi^2, \\
\end{equation*}
where \(e_\phi(t)=z_\phi(t)-T_4^{\mathrm{des}}\),
\([x]_+=\max\{x,0\}\), \([x]_-=\min\{x,0\}\),
\(\beta_{\mathrm{ov}}\ge1\) penalizes overshoot, and
\(\rho_A\ge0\) penalizes actuation effort.
The gradient \(\nabla_\phi J_m\) is approximated by bounded central finite differences, and the resulting gain update is projected to admissible bounds and smoothed before being used in the next window. In this way, the controller preserves the fixed PID structure while retuning its three gains online to match the delayed nonlinear endocrine dynamics.

\begin{remark}
The derivative term is implemented as derivative-on-measurement rather than derivative-on-error. Since the target \(T_4^{\mathrm{des}}\) is constant, one has \(\dot e(t)=-\dot z(t)\), so the two formulations are equivalent up to sign. Using derivative-on-measurement avoids the standard derivative-kick effect associated with differentiating the tracking error directly and is consistent with the filtered sampled-data implementation adopted here.
\end{remark}
\begin{remark}
In the implemented controller, this sampled-data adaptive PID law is further combined with practical safeguards including effective integral-gain scheduling near the target, anti-windup
back-calculation, and EF/amplitude rate limits. 
\end{remark}
\begin{algorithm}[!htbp]
\footnotesize
\caption{Sampled-Data Adaptive PID with Band-Lock Holding}
\label{alg:apid_compact}
\begin{algorithmic}[1]
\Require Setpoint \(T_4^{\mathrm{des}}\), window length \(T_s\), initial gains
\(\phi_0=[K_p,K_i,K_d]^\top\), initial command \(EF_0\), amplitude \(A_0\), actuator bounds, gain bounds, and band-lock parameters.
\Ensure Sampled output, EF command, amplitude, gain, cost, and lock histories.

\State Initialize \(I_0=0\), \(d_0^{(f)}=0\), \(z_{-1}=z(t_0)\), \(A_{\mathrm{basal}}=A_0\), and \(\texttt{locked}=0\).
\For{each control window \(m=0,1,2,\ldots\)}
    \State Apply the burst-averaged actuator input \(EF_{\mathrm{eff}}(t)\) generated by the held amplitude \(A_m\) on \([t_m,t_{m+1})\).
    \State Propagate the plant continuously and sample \(z_m=T_4^{\mathrm{ext}}(t_m)\).
    \State Compute
    \[
    e_m=T_4^{\mathrm{des}}-z_m,
    d_m=\frac{z_m-z_{m-1}}{T_s},
    d_m^{(f)}=(1-\alpha_d)d_{m-1}^{(f)}+\alpha_d d_m .
    \]
    \State Update the hysteresis state: enter lock if
    \(|e_m|\le \eta_{\mathrm{in}}T_4^{\mathrm{des}}\), and release lock if
    \(|e_m|\ge \eta_{\mathrm{out}}T_4^{\mathrm{des}}\), with
    \(0<\eta_{\mathrm{in}}<\eta_{\mathrm{out}}\).
    \If{\(\texttt{locked}=1\)}
        \State Update the basal amplitude and apply the holding law
        \[
        A_{\mathrm{basal}}\leftarrow
        \operatorname{sat}_{[A_{\min},A_{\max}]}
        (A_{\mathrm{basal}}+\eta_{\mathrm{basal}}e_m),
        \]
        \[
        A_{m+1}\leftarrow
        \operatorname{sat}_{[A_{\min},A_{\max}]}
        (A_{\mathrm{basal}}+k_{\mathrm{lock}}e_m).
        \]
        \State Recover \(EF_{m+1}\) from the inverse amplitude map and apply the EF and amplitude rate limits.
    \Else
        \State Evaluate the one-window predictive cost \(J_m(\phi_m)\) and approximate
        \[
        \nabla_\phi J_m(\phi_m)
        \approx
        \left[
        \frac{J_m(\phi_m+\varepsilon_j e_j)-J_m(\phi_m-\varepsilon_j e_j)}
        {2\varepsilon_j}
        \right]_{j=1}^3 .
        \]
        \State Update and project the PID gains:
        \[
        \phi_{m+1}
        =
        \Pi_{\Phi}\!\left(
        \phi_m-\alpha_\phi
        \frac{\nabla_\phi J_m(\phi_m)}
        {\max\{1,\|\nabla_\phi J_m(\phi_m)\|\}}
        \right).
        \]
        where \(\alpha_{\phi}=[\alpha_p,\alpha_i,\alpha_d^{(g)}]^\top\) is the componentwise gain-adaptation stepsize vector for \(\phi=[K_p,K_i,K_d]^\top\).
        \State Update the integral state \(I_m\leftarrow I_{m-1}+e_mT_s\), with anti-windup correction after saturation.
        \State Form the unsaturated PID command
        \[
        EF_m^{\mathrm{unsat}}
        =
        K_p^{(m+1)}e_m
        +
        K_i^{(m+1)}I_m
        -
        K_d^{(m+1)}d_m^{(f)} .
        \]
        \State Apply EF rate limiting and saturation:
        \[
        EF_{m+1}
        =
        \operatorname{sat}_{[EF_{\min},EF_{\max}]}
        \!\left(
        \operatorname{rlim}_{\Delta EF_{\max}}
        (EF_m^{\mathrm{unsat}},EF_m)
        \right).
        \]
        \State Map \(EF_{m+1}\) to \(A_{m+1}\) using the thresholded affine amplitude map and apply amplitude bounds/rate limits.
    \EndIf
    \State Log \(z_m,EF_{m+1},A_{m+1},\phi_{m+1},J_m\), and the lock state.
\EndFor
\end{algorithmic}
\end{algorithm}
In \Cref{alg:apid_compact}, \(J_m(\phi)\) denotes the one-window
predictive tracking-and-actuation cost evaluated from the current sampled state,
\(\Pi_\Phi\) denotes projection onto admissible gain bounds, and
\(\operatorname{rlim}_{\Delta}\) is the rate limiter defined in
\eqref{eq:rate-limit}.
\subsection{One-Window Predictive Cost Evaluation}
Algorithm~\ref{alg:window_cost} defines the one-window predictive cost used to retune the PID gains. For a candidate gain vector $\phi=(K_p,K_i,K_d)$, the algorithm constructs the corresponding trial EF command, applies the same actuator constraints used in the closed-loop simulation, maps the command to a burst-averaged amplitude, and propagates the plant over one control window. The returned quantity $J_m(\phi)$ measures the predicted tracking error over the next window together with a quadratic actuation penalty. This cost is used only for model-assisted gain adaptation; it is not a separate controller acting on the plant.
\begin{algorithm}[H]
\small
\caption{WindowCost: One-Window Predictive Cost Evaluation}
\label{alg:window_cost}
\begin{algorithmic}[1]
\Require
Current state $y(t_m)$, candidate gains $\phi=(K_p,K_i,K_d)$, setpoint
$T_4^{\mathrm{des}}$, samples $z_m,z_{m-1}$, derivative state
$d_{m-1}^{(f)}$, integral state $I_m$, previous EF command $EF_m$,
current amplitude $A_m$, basal amplitude $A_{\mathrm{basal}}$,
band-entry flag, controller parameters, amplitude-map parameters, and
plant/actuator parameters.
\Ensure
Predictive one-window cost $J_m(\phi)$.

\State Compute
\[
e_m=T_4^{\mathrm{des}}-z_m,
d_m=\frac{z_m-z_{m-1}}{T_s},
d_m^{(f)}=(1-\alpha_d)d_{m-1}^{(f)}+\alpha_d d_m .
\]

\State Set
\[
K_i^{\mathrm{eff}}=
\begin{cases}
1.25K_i, & |e_m|<e_{\mathrm{small}},\\
K_i, & |e_m|\ge e_{\mathrm{small}},
\end{cases}
\qquad
I_m^{\mathrm{tr}}=I_m+e_mT_s .
\]

\State Form the trial unsaturated command
\[
EF_m^{\mathrm{unsat}}
=
K_p e_m+K_i^{\mathrm{eff}}I_m^{\mathrm{tr}}
-K_d d_m^{(f)} .
\]

\State Apply EF rate limiting and saturation:
\[
EF_m^{\mathrm{tr}}
=
\sat_{[EF_{\min},EF_{\max}]}
\!\left(
\rate_{\Delta EF_{\max}}(EF_m^{\mathrm{unsat}},EF_m)
\right).
\]

\State Apply anti-windup:
\[
I_m^{\mathrm{tr}}
\leftarrow
I_m^{\mathrm{tr}}
+
K_{b,\mathrm{aw}}
\bigl(EF_m^{\mathrm{tr}}-EF_m^{\mathrm{unsat}}\bigr)T_s .
\]

\State Map $EF_m^{\mathrm{tr}}$ to amplitude and apply smoothing, rate limits, and bounds:
\State Set $\widetilde A_m^{\mathrm{tr}}
=(1-\lambda_A)A(EF_m^{\mathrm{tr}})+\lambda_A A_m$.

\State Update
\[
A_m^{\mathrm{tr}}
\leftarrow
\sat_{[A_{\min},A_{\max}]}
\!\left(
\rate_{\Delta A_{\max}}
(\widetilde A_m^{\mathrm{tr}},A_m)
\right).
\]

\If{the band-entry flag is active}
\State Enforce the post-band soft floor:
\[
A_m^{\mathrm{tr}}
\leftarrow
\max\{A_m^{\mathrm{tr}},\,0.98A_{\mathrm{basal}}\}.
\]
\EndIf

\State Simulate the plant on $[t_m,t_m+T_s]$ using the burst-averaged EF profile generated by $A_m^{\mathrm{tr}}$, and let $z_\phi(t)$ denote the predicted output.

\State Define
\[
\begin{aligned}
J_m(\phi)
&=
\int_{t_m}^{t_m+T_s}
\left(
\beta_{\mathrm{ov}}[e_\phi(t)]_+^2
+
[e_\phi(t)]_-^2
\right)\,dt
+
\rho_A\bigl(A_m^{\mathrm{tr}}\bigr)^2,\\
e_\phi(t)&=z_\phi(t)-T_4^{\mathrm{des}},
\quad
[x]_+=\max\{x,0\},\quad [x]_-=\min\{x,0\}.
\end{aligned}
\]

\State \Return $J_m(\phi)$.
\end{algorithmic}
\end{algorithm}
\noindent \Cref{alg:window_cost} evaluates the one-window predictive cost by replaying the same controller-side APID update ingredients used in the live loop, including filtered derivative-on-measurement, effective integral-gain scheduling, EF rate limiting, anti-windup correction, amplitude rate limiting, and the post-band amplitude floor after first target entry. 
The weights \(\beta_{ov}\) and \(\rho_A\) are cost-shaping parameters in the one-window predictive objective. In practice, increasing \(\beta_{ov}\) makes the controller more cautious near the target by discouraging overshoot, whereas increasing \(\rho_A\) reduces actuation aggressiveness at the cost of slower convergence.
\subsection{Sampled-Data adaptive PID with band-lock stabilization}
\label{subsec:band_lock}
Because the plant is delayed and the actuation is windowed, a purely adaptive PID update can still exhibit late-time drift or repeated correction after the trajectory first reaches the target neighborhood. To reduce this effect, we augment the adaptive PID law with a hysteretic band-lock mechanism.

\noindent Let
\( 
e_m^{\mathrm{true}}:=T_4^{\mathrm{des}}-z_m,
\)
and choose inner and outer hysteresis radii
\[
\delta_{\mathrm{in}}=\eta_{\mathrm{in}}T_4^{\mathrm{des}},
\qquad
\delta_{\mathrm{out}}=\eta_{\mathrm{out}}T_4^{\mathrm{des}},
\qquad
0<\eta_{\mathrm{in}}<\eta_{\mathrm{out}}.
\]
When the sampled output first satisfies $|e_m^{\mathrm{true}}|\le \delta_{\mathrm{in}}$,
the controller declares band entry, stores the current amplitude as a basal level \(A_{\mathrm{basal}}\), resets the integral state, and switches from the normal adaptive PID update to a low-gain holding law. While the trajectory remains locked, the basal amplitude is updated slowly according to the sign of the tracking error, and the next commanded amplitude is taken as a small correction around this basal level:
\[
A_{m+1}
=
\operatorname{sat}\!\left(
A_{\mathrm{basal}}+k_{\mathrm{lock}}e_m^{\mathrm{true}},
A_{\min},A_{\max}
\right),
\]
followed by the same amplitude rate limit as in the controller. The corresponding EF command is then recovered from the inverse of the amplitude map.

\noindent If the output later leaves the larger outer band,  $|e_m^{\mathrm{true}}|\ge \delta_{\mathrm{out}},$
the lock is released and the controller returns to the full adaptive PID update. The use of distinct entry and exit radii creates hysteresis and prevents rapid switching near the boundary.

\noindent This procedure is reported in \Cref{alg:bandlock_update} and the effect of this mechanism is visible in \Cref{fig:apid_bandlock_setpoint25}, where the response approaches the \(\pm 5\%\) band and then remains near the target without the large late-time drop seen in the no-band-lock comparison reported in \Cref{fig:bandlock_vs_nobandlock}.

Algorithm~\ref{alg:bandlock_update} describes the low-gain holding law used after the measured output first enters the prescribed target band. Rather than continuing to adapt the PID gains aggressively near the setpoint, the controller stores a basal amplitude and applies small corrections around this learned sustaining level. The hysteretic band-lock mechanism reduces late-time drift and repeated over-correction in the delayed endocrine response, while retaining the same amplitude bounds, rate limits, and EF-command constraints used by the normal controller.
\begin{algorithm}[H]
\footnotesize
\caption{\textsc{BandLockUpdate}: Basal Holding Law Near the Setpoint}
\label{alg:bandlock_update}
\begin{algorithmic}[1]
\Require
True setpoint error $e_m^{\mathrm{true}}=T_4^{\mathrm{des}}-z_m$,
current amplitude $A_m$, current EF command $EF_m$, and basal amplitude
$A_{\mathrm{basal}}$.
\Ensure
Next amplitude $A_{m+1}$, next EF command $EF_{m+1}$, and updated basal
amplitude $A_{\mathrm{basal}}$.

\State Update the basal support amplitude:
\[
A_{\mathrm{basal}}
\leftarrow
\sat_{[A_{\min},A_{\max}]}
\!\left(
A_{\mathrm{basal}}+\eta_{\mathrm{basal}}e_m^{\mathrm{true}}
\right).
\]

\State Form a low-gain correction around the basal level:
\[
A_{m+1}
\leftarrow
\sat_{[A_{\min},A_{\max}]}
\!\left(
A_{\mathrm{basal}}+k_{\mathrm{lock}}e_m^{\mathrm{true}}
\right).
\]

\State Apply amplitude rate limiting and bounds:
\[
A_{m+1}
\leftarrow
\sat_{[A_{\min},A_{\max}]}
\!\left(
\rate_{\Delta A_{\max}}(A_{m+1},A_m)
\right).
\]

\State Recover the inverse-map EF command:
\[
EF_{m+1}^{\mathrm{inv}}
=
\begin{cases}
0, & A_{m+1}\le 0,\\[1mm]
\mathrm{thr}+A_{m+1}/k_A, & A_{m+1}>0.
\end{cases}
\]

\State Apply EF rate limiting and saturation:
\[
EF_{m+1}
\leftarrow
\sat_{[EF_{\min},EF_{\max}]}
\!\left(
\rate_{\Delta EF_{\max}}(EF_{m+1}^{\mathrm{inv}},EF_m)
\right).
\]

\State \Return $(A_{m+1},EF_{m+1},A_{\mathrm{basal}})$.
\end{algorithmic}
\end{algorithm}
\subsection{Normal Adaptive PID Gain Update}
Algorithm~\ref{alg:normal_adaptive_PID} gives the normal APID update used outside the locked holding mode. The algorithm approximates the gradient of the one-window cost $J_m(\phi)$ by central finite differences, takes a bounded componentwise descent step, projects the gains onto admissible intervals, and then computes the sampled-data PID command. The remaining steps enforce derivative-on-measurement feedback, integral leakage and anti-windup, EF saturation and rate limits, amplitude bounds, and the post-band amplitude floor. Thus, the update preserves a standard PID feedback structure while adapting only the three gains $\phi_m=[K_p,K_i,K_d]^T$. The vector
\[
\alpha_{\phi}
=
[\alpha_p,\alpha_i,\alpha_d^{(g)}]^T
\]
collects the componentwise adaptation stepsizes for
\(\phi_m=[K_p,K_i,K_d]^T\), where \(\alpha_d^{(g)}\) denotes the adaptation stepsize for \(K_d\) and is distinct from the derivative-filter coefficient \(\alpha_d\).
\begin{algorithm}[H]
\footnotesize
\caption{\textsc{AdaptivePIDUpdate}: Normal Adaptive PID Law}
\label{alg:normal_adaptive_PID}
\begin{algorithmic}[1]
\Require
State $y(t_m)$, reference $r_f(t_m)$, samples $z_m$, errors
$e_m=r_f(t_m)-z_m$, $e_{m-1}$, derivative state $d_m^{(f)}$,
integral state $I_m$, command $EF_m$, amplitude $A_m$, basal amplitude
$A_{\mathrm{basal}}$, and flag $\texttt{first\_band\_hit}$.
\Ensure
$\phi_{m+1}$, $I_m$, $EF_{m+1}$, and $A_{m+1}$.

\State Set
\begin{eqnarray*}
\phi_m&=[K_p,K_i,K_d]^T,\quad
\alpha_\phi=[\alpha_p,\alpha_i,\alpha_d^{(g)}]^T,\quad \\
&\Delta\phi_{\max}=[\Delta K_p^{\max},\Delta K_i^{\max},
\Delta K_d^{\max}]^T .
\end{eqnarray*}

\State For $j=1,2,3$, compute
\[
[g_m]_j =[\nabla_{\phi}J_m(\phi_m)]_{j}
\approx
\frac{
J_m(\phi_m+\varepsilon_{j,m}e_j)
-
J_m(\phi_m-\varepsilon_{j,m}e_j)}
{2\varepsilon_{j,m}},
\]
\State Normalize the gradient safely: $\|g_m\|_{\mathrm{safe}}:=\max\{1,\|g_m\|_2\}$.
\State Take a bounded componentwise descent step:
\[
\Delta\phi_m
=
\sat_{[-\Delta\phi_{\max},\,\Delta\phi_{\max}]}
\!\left(
-\alpha_\phi\odot \frac{g_m}{\|g_m\|_{\rm safe}}
\right).
\]

\State Project and smooth the gain update:
\[
\tilde\phi_{m+1}
=
\sat_{[\phi_{\min},\phi_{\max}]}
(\phi_m+\Delta\phi_m),
\quad
\phi_{m+1}
=
(1-\lambda)\phi_m+\lambda\tilde\phi_{m+1}.
\]

\State Extract $\phi_{m+1}=[K_p,K_i,K_d]^T$.

\If{$\sign(e_m)\sign(e_{m-1})<0$}
    \State Apply the sign-flip guard:
    \[
    I_m\leftarrow 0.5I_m,
    \qquad
    K_p\leftarrow \gamma_{\rm flip}K_p,
    \qquad 0<\gamma_{\rm flip}<1 .
    \]
\EndIf

\State Set
\[
K_i^{\rm eff}
=
\begin{cases}
1.25K_i, & |e_m|<e_{\rm small},\\
K_i, & |e_m|\ge e_{\rm small}.
\end{cases}
\]

\State Update the integral state:
\[
I_m
\leftarrow
\sat_{[-I_{\max},I_{\max}]}
\!\left((1-I_{\rm leak})I_m+e_mT_s\right).
\]

\State Compute the unsaturated PID command:
\[
EF_m^{\rm unsat}
=
K_pe_m+K_i^{\rm eff}I_m-K_dd_m^{(f)} .
\]

\State Apply EF rate limiting and saturation:
\[
EF_{m+1}
=
\sat_{[EF_{\min},EF_{\max}]}
\!\left(
\rate_{\Delta EF_{\max}}(EF_m^{\rm unsat},EF_m)
\right).
\]

\State Apply anti-windup:
\[
I_m
\leftarrow
I_m+
K_{b,\rm aw}
\bigl(EF_{m+1}-EF_m^{\rm unsat}\bigr)T_s .
\]

\State Map, rate-limit, and saturate the amplitude:
\[
\tilde A_{m+1}=A(EF_{m+1}), \ 
A_{m+1}
\leftarrow
\sat_{[A_{\min},A_{\max}]}
\!\left(
\rate_{\Delta A_{\max}}(\tilde A_{m+1},A_m)
\right).
\]

\If{$\texttt{first\_band\_hit}=1$}
    \State Enforce the post-band soft floor:
    \[
    A_{m+1}\leftarrow \max\{A_{m+1},0.98A_{\mathrm{basal}}\}.
    \]
\EndIf

\State \Return $(\phi_{m+1},I_m,EF_{m+1},A_{m+1})$.
\end{algorithmic}
\end{algorithm}
Together, Algorithms~\ref{alg:window_cost}-\ref{alg:normal_adaptive_PID} separate the controller implementation into three reusable components: predictive cost evaluation, basal holding after target entry, and normal adaptive PID updating. This separation clarifies that the live controller remains a sampled-data PID law with practical safeguards, while model-assisted rollouts are used only to compute the gain-update direction. The notation also makes explicit where actuator constraints, anti-windup, amplitude mapping, and band-lock supervision enter the closed-loop simulation.
\subsection{Local sampled-data input-to-state Stability (ISS) interpretation of APID}
\label{subsec:apid_iss}

We next give a compact local stability interpretation of the APID closed loop at the
sampling instants \(t_m=t_0+mT_s\). The result follows the standard
discrete-time input-to-state stability (ISS) Lyapunov argument
\cite{JiangWang2001,JiangWang2002}. Its role here is to interpret the APID
closed loop as a practical sampled-data regulator: model mismatch, measurement
noise, burst-averaging error, and implementation error enter as bounded
disturbance inputs. This viewpoint is consistent with emulation-based
sampled-data nonlinear control and sampled-data practical output regulation
\cite{NesicTeel2004,NesicTeelCarnevale2009,LiuHuang2021}.

Let the delayed EF-driven plant be written as
\[
\dot y(t)=F(y(t),EF(t),d(t)),
\qquad z(t)=h(y(t))=y_4(t),
\]
where \(d(t)\) collects model mismatch and unmodeled disturbances. Let \(y^\star\) satisfy \(h(y^\star)=T_4^{\rm des}\), and write the sampled controller-side output as
\[
z_m=h(y(t_m))+\eta_m.
\]
Define
\[
\xi_m=\left[(y(t_m)-y^\star)^\top,\ I_m-I^\star,\ d_m^{(f)}-d_\star^{(f)},\ (\phi_m-\phi^\star)^\top\right]^\top
\]
and
\[
w_m=\left[\bar d_m,\eta_m,{(\eta_m-\eta_{m-1})}/{T_s},\delta_m\right]^\top,
\]
where \(\bar d_m\) is sampled model mismatch and \(\delta_m\) collects burst-averaging, finite-difference, numerical, and implementation errors.

\begin{theorem}
\label{thm:apid_iss}
Consider the APID closed loop with EF saturation, EF rate limits, amplitude
bounds, gain projection, and the band-lock logic described above. 
Suppose that, on a compact operating set \(\mathcal D\), the sampled closed-loop dynamics are well posed and the output map \(h\) is locally Lipschitz on the
projection of \(\mathcal D\) onto the plant-state coordinates, and there exist a function \(V:\mathcal D\to\mathbb R_{\ge0}\) and
constants \(a_1,a_2,a_3,a_4>0\) such that
\[
a_1\|\xi_m\|^2
\le
V(\xi_m)
\le
a_2\|\xi_m\|^2,
\]
and
\[
V(\xi_{m+1})-V(\xi_m)
\le
-a_3\|\xi_m\|^2+a_4\|w_m\|^2
\]
for all sampled trajectories in \(\mathcal D\). Then the sampled APID closed
loop is locally input-to-state stable with respect to \(w_m\). More precisely,
let
\[
\bar a_3=\min\{a_3,a_2/2\},
\qquad
q=1-\frac{\bar a_3}{a_2}.
\]
Then \(q\in(0,1)\), and
\[
\|\xi_m\|
\le
\sqrt{\frac{a_2}{a_1}}\,q^{m/2}\|\xi_0\|
+
\sqrt{\frac{a_4a_2}{a_1\bar a_3}}\,
\sup_{0\le j\le m-1}\|w_j\|.
\]
Consequently, if $\bar w:=\sup_{j\ge0}\|w_j\|<\infty$,
then \(\xi_m\) is uniformly ultimately bounded and
\[
\limsup_{m\to\infty}\|\xi_m\|
\le
\sqrt{\frac{a_4a_2}{a_1\bar a_3}}\,\bar w .
\]
Moreover, there exists \(c_e>0\) such that
\[
\limsup_{m\to\infty}|e_m|
\le
c_e \bar w .
\]
In particular, if \(w_m\equiv0\), then the disturbance-free sampled APID equilibrium is exponentially stable.
\end{theorem}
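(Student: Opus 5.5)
The proof is a direct discrete-time ISS Lyapunov argument. Everything happens in the sampled coordinates \(\xi_m\), and the closed loop (saturations, rate limits, projection, band-lock) is treated as an abstract map \(\xi_{m+1}=G(\xi_m,w_m)\). This map is well posed on \(\mathcal D\) by hypothesis. All switching and constraint effects are absorbed into the assumed dissipation inequality, so we never need to inspect the algorithm's internal structure.

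\emph{Step 1: contraction of \(V\).} Use the upper sandwich bound \(\|\xi_m\|^2\ge V(\xi_m)/a_2\). Since \(\bar a_3\le a_3\), the dissipation inequality gives
\[
V(\xi_{m+1})\le V(\xi_m)-\frac{\bar a_3}{a_2}V(\xi_m)+a_4\|w_m\|^2=qV(\xi_m)+a_4\|w_m\|^2 .
\]
The choice \(\bar a_3=\min\{a_3,a_2/2\}\) gives \(\bar a_3/a_2\in(0,1/2]\), so \(q\in[1/2,1)\subset(0,1)\). This replacement is needed because \(a_3\) alone could exceed \(a_2\), which would make \(1-a_3/a_2\le 0\).

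\emph{Step 2: iterate and convert back to \(\xi\).} Induction gives
\[
V(\xi_m)\le q^mV(\xi_0)+a_4\sum_{j=0}^{m-1}q^{m-1-j}\|w_j\|^2 .
\]
Bound the sum by \(\sup_{j\le m-1}\|w_j\|^2\) times the geometric series \(1/(1-q)=a_2/\bar a_3\). Then apply \(a_1\|\xi_m\|^2\le V(\xi_m)\le a_2\|\xi_m\|^2\) at times \(m\) and \(0\), and use \(\sqrt{x+y}\le\sqrt x+\sqrt y\). This yields exactly the stated estimate with constants \(\sqrt{a_2/a_1}\,q^{m/2}\) and \(\sqrt{a_4a_2/(a_1\bar a_3)}\).

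\emph{Step 3: consequences.} The first term decays geometrically. Taking \(\limsup\) with \(\sup_j\|w_j\|\le\bar w\) gives the ultimate bound. Uniformity follows because the decay rate depends only on \(q\) and on \(\|\xi_0\|\), which is bounded on the compact set \(\mathcal D\). With \(w\equiv0\) the estimate reduces to \(\|\xi_m\|\le\sqrt{a_2/a_1}\,q^{m/2}\|\xi_0\|\), which is exponential stability.

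\emph{Step 4: the error bound.} This is the only step that uses more than algebra. Write
\[
e_m=T_4^{\rm des}-z_m=h(y^\star)-h(y(t_m))-\eta_m .
\]
Local Lipschitzness of \(h\) on the compact projection of \(\mathcal D\) gives a uniform constant \(L_h\), so \(|e_m|\le L_h\|y(t_m)-y^\star\|+|\eta_m|\le L_h\|\xi_m\|+\|w_m\|\). Taking \(\limsup\) gives \(c_e=L_h\sqrt{a_4a_2/(a_1\bar a_3)}+1\).

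\emph{Main obstacle.} The delicate point is not the algebra but the requirement that trajectories remain in \(\mathcal D\), which is needed for the inequalities to hold at every step. I would handle this by stating the result for initial conditions and disturbances small enough that the sublevel set \(\{V\le a_2 r^2\}\) with \(r\) large enough stays inside \(\mathcal D\). This set is forward invariant whenever \(a_4\bar w^2\le(1-q)a_2r^2\), which is the sense in which the ISS claim is local. Everything else follows from the hypotheses of Theorem~\ref{thm:apid_iss}.
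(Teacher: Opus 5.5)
Your proposal is correct and follows the paper's proof essentially step for step: the reduction to $V(\xi_{m+1})\le qV(\xi_m)+a_4\|w_m\|^2$ via $\bar a_3=\min\{a_3,a_2/2\}$, geometric iteration with $\sum_j q^{m-1-j}\le a_2/\bar a_3$, the quadratic sandwich bounds, and the Lipschitz estimate on $h$ for $e_m$ (you additionally make $c_e$ explicit). Your forward-invariance remark goes beyond the paper, which builds ``trajectories remain in $\mathcal D$'' into the hypotheses; if you pursue it, the radius $r$ must be small enough for the sublevel set to fit inside $\mathcal D$ and only large relative to $\bar w$, and you must know $\xi_{m+1}\in\mathcal D$ before the dissipation inequality can be applied.
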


\begin{proof}
Since \(\bar a_3=\min\{a_3,a_2/2\}\), the Lyapunov difference inequality implies
\[
V(\xi_{m+1})-V(\xi_m)
\le
-\bar a_3\|\xi_m\|^2+a_4\|w_m\|^2 .
\]
Using \(V(\xi_m)\le a_2\|\xi_m\|^2\), we have
\[
\|\xi_m\|^2\ge \frac{1}{a_2}V(\xi_m).
\]
Therefore,
\[
V(\xi_{m+1})
\le
\left(1-\frac{\bar a_3}{a_2}\right)V(\xi_m)
+
a_4\|w_m\|^2
=
qV(\xi_m)+a_4\|w_m\|^2 .
\]
Because \(0<\bar a_3\le a_2/2\), \(q\in[1/2,1)\). Iterating the preceding
inequality gives
\[
V(\xi_m)
\le
q^m V(\xi_0)
+
a_4\sum_{j=0}^{m-1}q^{m-1-j}\|w_j\|^2 .
\]
Since
\[
\sum_{j=0}^{m-1}q^{m-1-j}\le \frac{1}{1-q}
=\frac{a_2}{\bar a_3},
\]
we obtain
\[
V(\xi_m)
\le
q^m V(\xi_0)
+
\frac{a_4a_2}{\bar a_3}
\sup_{0\le j\le m-1}\|w_j\|^2 .
\]
If the disturbance sequence $w_m$ is uniformly bounded, i.e.,
$\bar w:=\sup_{j\ge0}\|w_j\|<\infty$,
then the preceding estimate implies
\[
V(\xi_m)
\le
q^m V(\xi_0)
+
\frac{a_4a_2}{\bar a_3}\bar w^2 .
\]
Using the quadratic bounds on \(V\), 
\[
a_1\|\xi_m\|^2\le V(\xi_m),
\qquad
V(\xi_0)\le a_2\|\xi_0\|^2,
\]
we get
\[
\|\xi_m\|
\le
\sqrt{\frac{a_2}{a_1}}\,q^{m/2}\|\xi_0\|
+
\sqrt{\frac{a_4a_2}{a_1\bar a_3}}\,
\sup_{0\le j\le m-1}\|w_j\|. 
\]
Taking the limit superior as \(m\to\infty\), the exponentially decaying term
vanishes, giving the stated ultimate bound for \(\xi_m\).

It remains to relate the state bound to the sampled tracking error. Since
\[
e_m=T_4^{\rm des}-z_m
=
h(y^\star)-h(y(t_m))-\eta_m,
\]
by the local Lipschitz assumption on \(h\), there exists \(L_h>0\) on the
plant-state projection of \(\mathcal D\) such that

\[
|e_m|
\le
L_h\|y(t_m)-y^\star\|+|\eta_m|
\]
for some \(L_h>0\). Because \(y(t_m)-y^\star\) is a component of \(\xi_m\) and
\(\eta_m\) is a component of \(w_m\), the ISS estimate implies
\[
\limsup_{m\to\infty}|e_m|
\le
c_e\sup_{j\ge0}\|w_j\|
\]
for a constant \(c_e>0\). If \(w_m\equiv0\), then
\(V(\xi_{m+1})\le qV(\xi_m)\), and the same quadratic bounds imply exponential
convergence of \(\xi_m\) to zero. This completes the proof.
\end{proof}

Theorem~\ref{thm:apid_iss} should be read as a practical stability statement.
It does not claim global asymptotic convergence of the full delayed biological
plant. Rather, it states that, whenever the sampled APID closed loop admits the
one-step ISS Lyapunov inequality above, the measured tracking error remains
inside a disturbance-dependent neighborhood of the target . In the disturbance-free case, this neighbourhood
collapses to the equilibrium.
\subsection{Computational closed-loop validation for \texorpdfstring{$T_4$}{T4} regulation}
We evaluate APID on the \(16\)-state model with the burst-averaged actuator. The numerical studies test target tracking across setpoints, bounded actuation under hardware constraints, retention near the target after band entry, and robustness under uncertainty.

\begin{figure}[t]
\centering
\includegraphics[width=\linewidth]{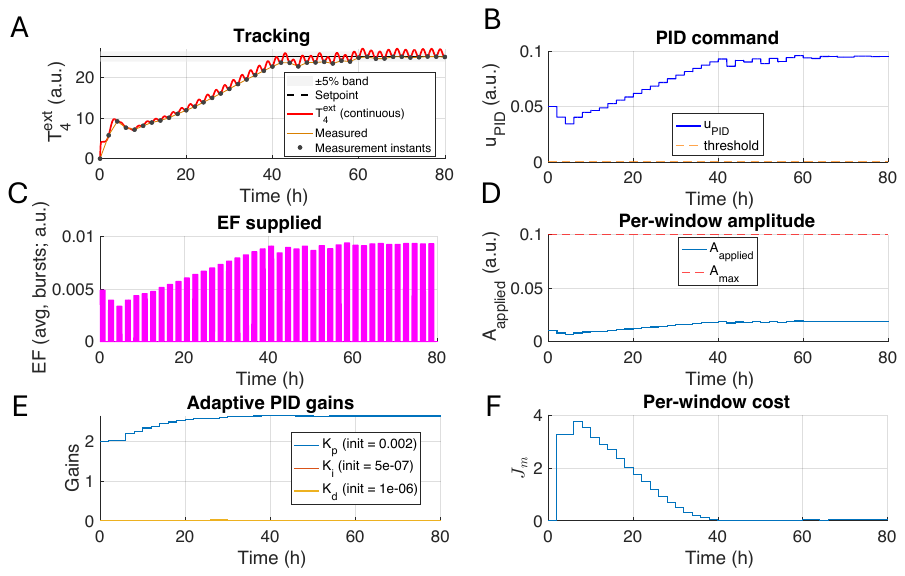}
\caption{Closed-loop performance of APID with band-lock holding for \(T_4^{\rm des}=25\) (a.u.). Panels show tracking, PID command, burst-averaged EF, applied amplitude, adaptive gains, and one-window predictive cost.}
\label{fig:apid_bandlock_setpoint25}
\end{figure}

\Cref{fig:apid_bandlock_setpoint25} shows a representative trajectory for \(T_4^{\rm des}=25\) (a.u.). The output rises toward the target, the sampled measurements remain close to the continuous trajectory, and the EF command and applied amplitude evolve gradually under the imposed rate limits and bounds. In the simulations, \(\mathrm{thr}=10^{-4}\), \(A_{\min}=0\), \(A_{\max}=0.10\), \(k_A=0.20\), \(EF_{\min}=0\), \(EF_{\max}=100\), and \(\Delta EF_{\max}=25\), all in arbitrary units. The dominant adaptation occurs in \(K_p\), while the predictive cost decreases over the transient. After target entry, band-lock maintains regulation near the setpoint.

\begin{figure}[t]
\centering
\includegraphics[width=\linewidth]{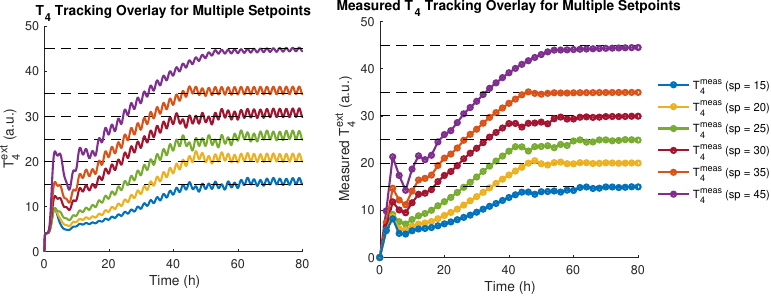}
\caption{Closed-loop trajectories for multiple setpoints \(T_4^{\rm des}\in\{15,20,25,30,35,45\}\) under APID with band-lock holding.}
\label{fig:T4_overlay_multiple_setpoints}
\end{figure}

\Cref{fig:T4_overlay_multiple_setpoints} shows that the same controller structure regulates across several target levels. Larger setpoints lead to larger steady-state hormone levels, and all trajectories show the same qualitative pattern: an initial rise followed by convergence toward the desired neighborhood with small bounded oscillations. The measured-data sweep yields small final errors and band-entry times close to settling times; for example, at setpoints \(20,25,30,35,45\), the rise times are approximately \(50,38,36,36,42\) h, respectively, and the final measured errors are small relative to the target.

A fixed-gain sampled-data PID baseline under the same actuator constraints fails to reach the desired level and saturates both the EF command and applied amplitude, whereas APID reaches the target neighborhood using substantially smaller actuation. Similarly, comparison with an APID variant without band-lock shows that the band-lock mechanism improves near-setpoint retention and reduces late-time deviation. The fixed-gain PID and no-band-lock comparisons are summarized in \Cref{fig:fixedpid_vs_apid}, \Cref{fig:bandlock_vs_nobandlock}, and  \Cref{tab:bandlock_vs_nobandlock}.

\begin{table}[t]
\centering
\normalsize
\setlength{\tabcolsep}{6pt}
\renewcommand{\arraystretch}{1.05}
\caption{Measured-data performance for the multi-setpoint APID sweep.}
\label{tab:measured_T4_metrics}
\begin{tabular}{c c c c c c}
\hline
\(T_4^{\mathrm{des}}\) & Rise (h) & Peak & OS (\%) & Final err. & Settling (h) \\
\hline
20 & 50 & 20.594 & 2.97 &  0.072 & 54 \\
25 & 38 & 24.973 & 0    & -0.074 & 58 \\
30 & 36 & 29.930 & 0    & -0.070 & 54 \\
35 & 36 & 35.112 & 0.32 & -0.021 & 42 \\
45 & 42 & 44.514 & 0    & -0.486 & 50 \\
\hline
\end{tabular}
\end{table}

\Cref{tab:measured_T4_metrics} summarizes the measured-data APID sweep.
Here the rise time is \(t_{90}-t_{10}\), overshoot is
\[
\%OS=100\,\frac{\max(0,z_{\max}-T_4^{\mathrm{des}})}{T_4^{\mathrm{des}}},
\]
and settling time is the first sampling time after which all subsequent samples remain within the \(\pm5\%\) target band. Across all tested setpoints, APID reaches the target neighborhood with small overshoot and bounded final error.

\begin{figure}[t]
    \centering
    \includegraphics[width=0.8\textwidth]{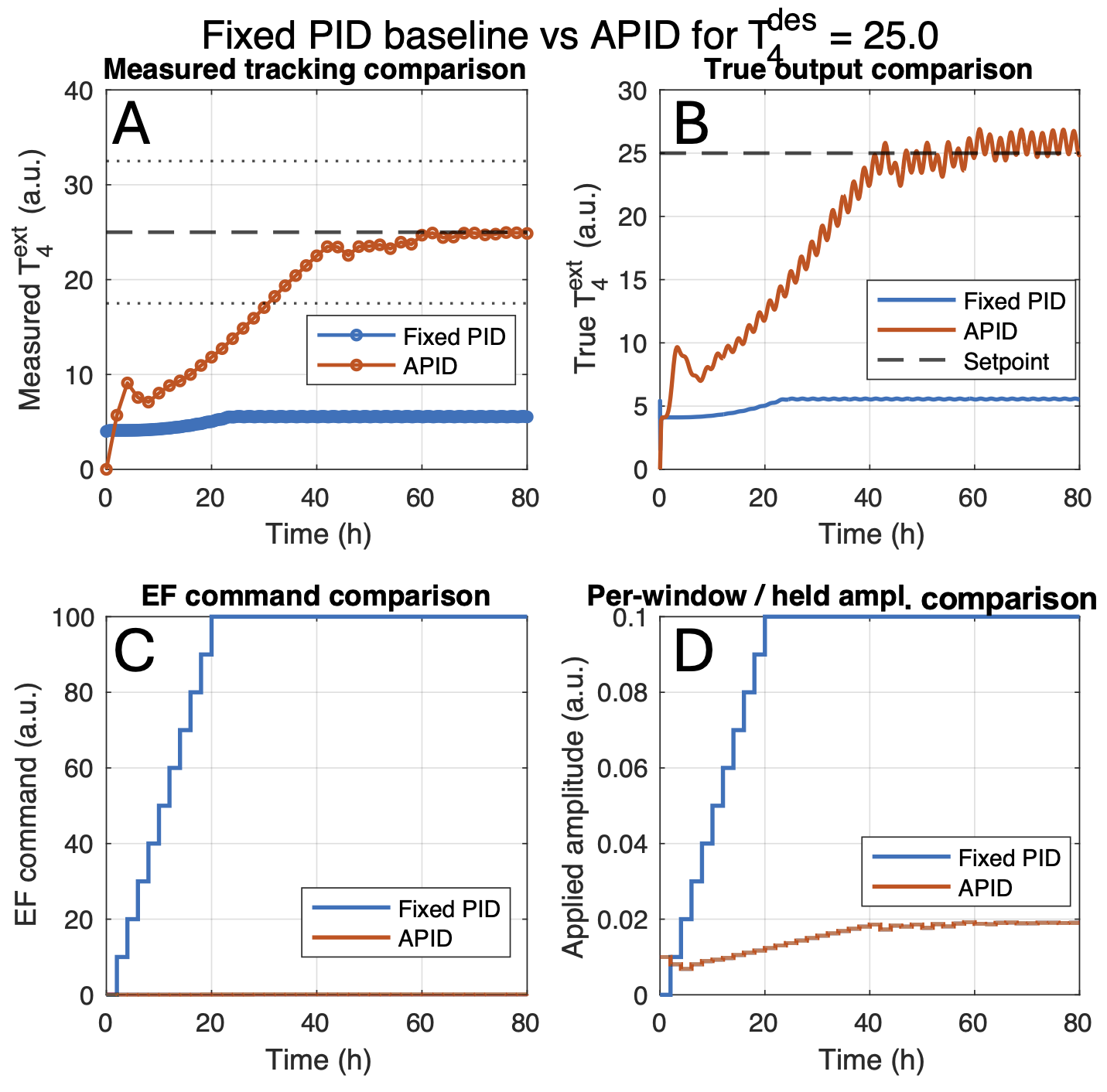}
    \caption{Comparison of a sampled-data PID baseline with fixed gains and the adaptive PID (APID) controller for \(T_4^{\mathrm{des}}=25\) (a.u.). A: measured-output tracking. B: true output trajectory. C: EF command. D: applied amplitude. The fixed-gain controller fails to drive the output toward the desired level and rapidly saturates both the EF command and the stimulation amplitude, whereas APID reaches the target neighborhood while using substantially smaller actuation.}
    \label{fig:fixedpid_vs_apid}
\end{figure}
\Cref{fig:fixedpid_vs_apid} clarifies the role of adaptation in the present delayed nonlinear setting. Under the same sampled-data architecture and actuator constraints, the fixed-gain PID baseline fails to regulate the hormone output to the desired level and instead saturates both the EF command and the applied amplitude. By contrast, APID reaches the target neighborhood with substantially smaller actuation. This comparison shows that, for the present EF-driven endocrine model, online gain adaptation is not a minor refinement but an essential component of effective regulation.

\begin{figure}[t]
    \centering
    \includegraphics[width=0.5\textwidth]{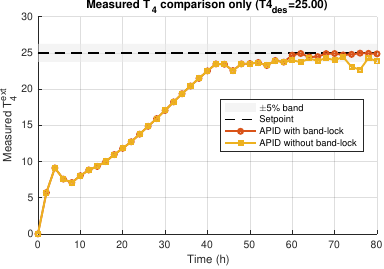}
    \caption{Comparison of measured \(T_4^{\mathrm{ext}}\) trajectories for adaptive PID control with and without the band-lock mechanism at \(T_4^{\mathrm{des}}=25\) (a.u.). The shaded region denotes the \(\pm5\%\) tolerance band around the target. The band-lock controller exhibits improved near-setpoint retention, spends more time inside the tolerance band, and achieves a final measured \(T_4\) value closer to the target.}
    \label{fig:bandlock_vs_nobandlock}
\end{figure}

\noindent \Cref{fig:bandlock_vs_nobandlock} isolates the effect of the supervisory band-lock mechanism within the adaptive PID architecture. Both controllers initially approach the target region, but the version without band-lock exhibits a larger late-time deviation below the setpoint. By contrast, once target entry is detected, the band-lock controller maintains a sustaining basal stimulation level and thereby improves retention inside the desired neighborhood. This comparison shows that band-locking is not merely a cosmetic safeguard; in the present delayed endocrine system it materially improves setpoint maintenance after convergence.

\begin{table}[t]
\centering
\normalsize
\setlength{\tabcolsep}{5pt}
\renewcommand{\arraystretch}{1.05}
\caption{Effect of band-lock on APID performance at \(T_4^{\mathrm{des}}=25\).}
\label{tab:bandlock_vs_nobandlock}
\begin{tabular}{l c c c}
\hline
Controller & IAE & Time-in-5\% & Final \(T_4^{\mathrm{ext}}\) \\
\hline
With band-lock    & 29807.7 & 31.71\% & 24.9255 \\
Without band-lock & 31228.5 & 21.95\% & 23.5787 \\
\hline
\end{tabular}
\end{table}
\noindent \Cref{tab:bandlock_vs_nobandlock} shows that band-lock improves near-target retention: it lowers integral absolute error (IAE), increases the fraction of samples inside the \(\pm5\%\) band, and gives a final
measured output closer to the setpoint.

\section{Robust Adaptive PID (RAPID) Control Design}
\label{sec:rapid}
Biological regulation in practice is subject to parameter uncertainty, sensor noise and bias, actuator mismatch, timing uncertainty, and unmodeled disturbances. RAPID preserves the APID sampled-data structure but modifies the measurement channel and gain-update mechanism. At each control window, the controller forms the tracking error from a filtered and bias-corrected measurement, updates the PID terms, and computes the next EF command subject to rate limits, saturation, and anti-windup. The robustification enters through the gain update: instead of tuning \((K_p,K_i,K_d)\) using one nominal prediction, RAPID evaluates perturbed scenarios and minimizes a mean-CVaR objective.

The measured output is modeled as
\[
z_m=T_4^{\rm ext}(t_m)+b_\eta+\eta_m,
\qquad \eta_m\sim\mathcal N(0,\sigma_\eta^2),
\]
where \(\sigma_\eta\) is the measurement-noise standard deviation and \(b_\eta\) is a constant sensor bias. The robust simulations include multiplicative uncertainty in selected cascade and thyroid-core parameters, additive exogenous disturbance in the extracellular \(T_4\) equation, actuator-gain mismatch, EF delay, jitter, and measurement corruption; see \Cref{tab:uncertainty_compact}.

Let \(d_4(t)\) denote the additive exogenous disturbance entering the extracellular
\(T_4\) dynamics:
\[
d_4(t)=b_{d_4}+a_{d_4}\sin\!\left(\frac{2\pi t}{P_{d_4}}+\phi_{d_4}\right),
\]
where \(b_{d_4}\), \(a_{d_4}\), \(P_{d_4}\), and \(\phi_{d_4}\) denote the disturbance
bias, amplitude, period, and phase, respectively.

The actuator uncertainty is represented by a multiplicative gain mismatch and
a timing perturbation. If \(EF_{\rm cmd}(t)\) denotes the burst-averaged command
generated by the controller, the perturbed actuator applies
\[
EF_{\rm app}(t)
=
(1+g_A)\,EF_{\rm cmd}\!\left(t-\tau_A-\nu_{A,m}\right),
t\in[t_m,t_{m+1}),
\]
where \(g_A\) is a dimensionless actuator-gain error, \(\tau_A\ge0\) is a persistent
delay, and \(\nu_{A,m}\) is a windowwise jitter satisfying
\[
|\nu_{A,m}|\le \sigma_A .
\]
Thus, \(\sigma_A\) is used as a jitter bound, not as measurement noise.
\begin{table}[t]
\caption{Uncertainty classes used in RAPID simulations.}
\label{tab:uncertainty_compact}
\centering
\small
\setlength{\tabcolsep}{6pt}
\renewcommand{\arraystretch}{1.55}
\begin{tabular}{p{0.24\linewidth}p{0.68\linewidth}}
\toprule
Class & Distribution or range \\
\midrule
EF cascade & \(\alpha_1,K,\gamma_1,\alpha\sim\mathcal U(\pm 8\%)\) relative to nominal values \\
Thyroid core & \(\kappa,K_2\sim\mathcal U(\pm10\%)\); \(\gamma_I,\alpha_{T_g},\gamma_{T_g},\gamma_{T_4}^{\rm int},\omega,\gamma_{T_4}^{\rm ext}\sim\mathcal U(\pm8\%)\) \\
Disturbance & \(d_4(t)=b_{d_4}+a_{d_4}\sin(2\pi t/P_{d_4}+\phi_{d_4})\), with \(b_{d_4}\sim\mathcal U(\pm0.01)\), \(a_{d_4}\sim\mathcal U(0,0.01)\), \(P_{d_4}=360\) min \\
Actuator & \(EF_{\rm app}(t)=(1+g_A)EF_{\rm cmd}(t-\tau_A-\nu_{A,m})\), with \(g_A\sim\mathcal U(\pm10\%)\), \(\tau_A\sim\mathcal U(0,5)\) s, \(|\nu_{A,m}|\le\sigma_A\), \(\sigma_A\sim\mathcal U(0,2)\) s \\
Measurement & \(\sigma_\eta=0.8\), \(b_\eta=2.0\) \\
\bottomrule
\end{tabular}
\end{table}
The complete RAPID implementation used in the robust closed-loop simulations is summarized in \Cref{alg:rapid-robust}.
\begin{algorithm}[H]
\footnotesize
\caption{RAPID: robust sampled-data adaptive PID with lock mode and scenario-CVaR gain tuning}
\label{alg:rapid-robust}
\begin{algorithmic}[1]
\Require Setpoint \(T_4^{\rm des}\), window length \(T_s\), initial gains, actuator constraints, robust settings \((M,\lambda,q)\), and random seed.
\State Sample one persistent true perturbation and initialize measurement filter, bias estimator, PID states, and band-lock variables.
\For{each control window \(m\)}
\State Apply the held burst-averaged EF waveform with actuator gain mismatch, delay, and jitter; propagate the perturbed plant over \([t_m,t_m+T_s)\).
\State Form raw, filtered, and bias-corrected measurements; update derivative-on-measurement and tracking error.
\State Update band-entry/band-exit logic and the basal amplitude if lock is active.
\State Sample \(M\) scenario perturbations and evaluate \(J_m^{\rm rob}\) in \eqref{eq:rapid_cost}.
\If{locked}
\State Update \(A_{m+1}\) using the basal holding law and map it to \(EF_{m+1}\).
\Else
\State Approximate \(\nabla_\phi J_m^{\rm rob}\) by bounded central differences; project and smooth the gain update.
\State Compute the PID command, apply EF saturation/rate limits, anti-windup, amplitude mapping, amplitude limits, and the post-band sustaining floor.
\EndIf
\State Log true output, used measurement, EF command, amplitude, gains, lock state, and robust cost.
\EndFor
\end{algorithmic}
\end{algorithm}

\subsection{Local risk-aware sampled-data ISS interpretation of RAPID}
\label{subsec:rapid_stability}
Let the perturbed plant be written as
\[
\dot y(t)=F(y(t),EF_{\rm app}(t),d(t)),
\qquad z(t)=h(y(t))=y_4(t).
\]
At \(t_m\), RAPID uses a filtered and bias-corrected measurement \(\widehat z_m\) and filtered reference \(r_{f,m}\), so that \(e_m=r_{f,m}-\widehat z_m\). The robust adaptive gain vector \(\phi_m=[K_p^{(m)},K_i^{(m)},K_d^{(m)}]^\top\) is updated using
\begin{equation}
\label{eq:rapid_cost}
J_m^{\rm rob}(\phi)=
(1-\lambda)\frac1M\sum_{i=1}^M J_m^{(i)}(\phi)
+\lambda\,\mathrm{CVaR}_q\big(\{J_m^{(i)}(\phi)\}_{i=1}^M\big),
\end{equation}
where \(J_m^{(i)}\) is the one-window cost under scenario \(i\), \(\lambda\in[0,1]\), and \(q\in(0,1)\). The projected gain update is
\[
\phi_{m+1}=\Pi_\Phi(\phi_m-\alpha_m\widehat g_m),
\qquad \widehat g_m=g_m+r_m,
\]
where \(r_m\) collects finite-difference, finite-scenario, numerical-prediction, and CVaR-approximation errors.

Let \(y^\star\) satisfy \(h(y^\star)=T_4^{\rm des}\), define \(\widetilde\phi_m=\phi_m-\phi^\star\), and let \(\xi_m\) collect the plant deviation, integral-state deviation, derivative-filter deviation, bias-estimator deviation, and \(\widetilde\phi_m\). Let \(w_m\) collect plant mismatch, residual measurement/filtering error, derivative-noise increments, actuator-gain error, delay, jitter, burst-averaging error, and \(r_m\).

\begin{theorem}
\label{thm:rapid_risk_iss}
Consider the RAPID closed loop with EF saturation, EF rate limits, amplitude
bounds, band-locking, and projected gain updates. Suppose that, on a compact
operating set \(\mathcal D\), the sampled closed-loop dynamics are well posed
and the output map \(h\) is locally Lipschitz on the plant-state projection of
\(\mathcal D\). Assume the following.

\begin{enumerate}
\item There exists a plant-regulation Lyapunov function \(V_y\) and constants
\(b_1,b_2,b_3,b_4,b_5>0\) such that
\[
b_1\|\chi_m\|^2
\le
V_y(\chi_m)
\le
b_2\|\chi_m\|^2,
\]
and
\[
V_y(\chi_{m+1})-V_y(\chi_m)
\le
-b_3\|\chi_m\|^2
+
b_4\|\widetilde\phi_m\|^2
+
b_5\|w_m\|^2 .
\]

\item The local risk-gradient component satisfies, uniformly in \(m\),
\[
\langle g_m,\widetilde\phi_m\rangle
\ge
\mu\|\widetilde\phi_m\|^2,
\qquad
\|g_m\|
\le
L\|\widetilde\phi_m\|,
\]
for some \(\mu,L>0\).

\item The step size satisfies
\[
0<\underline\alpha\le \alpha_m\le \bar\alpha<\frac{2\mu}{L^2}.
\]
\end{enumerate}

Then the sampled RAPID closed loop is locally input-to-state stable with
respect to \(w_m\). In particular, there exist constants \(C_1,C_2>0\) and
\(\rho\in(0,1)\) such that
\[
\|\xi_m\|
\le
C_1\rho^m\|\xi_0\|
+
C_2\sup_{0\le j\le m-1}\|w_j\|.
\]
If
\[
\bar w:=\sup_{j\ge0}\|w_j\|<\infty,
\]
then
\[
\limsup_{m\to\infty}\|\xi_m\|
\le
C_2\bar w .
\]
Moreover, there exists \(C_e>0\) such that
\[
\limsup_{m\to\infty}|e_m|
\le
C_e\bar w .
\]
If \(w_m\equiv0\), then the disturbance-free sampled RAPID equilibrium is
exponentially stable.
\end{theorem}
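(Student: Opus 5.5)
The plan is to build a composite Lyapunov function and reduce to Theorem~\ref{thm:apid_iss}. Split $\xi_m=(\chi_m,\widetilde\phi_m)$, where $\chi_m$ collects the plant deviation, the integral-state deviation, the derivative-filter deviation and the bias-estimator deviation. Set
\[
W(\xi_m)=V_y(\chi_m)+c\,\|\widetilde\phi_m\|^2
\]
with a weight $c>0$ to be chosen. Then $W$ satisfies quadratic bounds
\[
\min\{b_1,c\}\|\xi_m\|^2\le W(\xi_m)\le\max\{b_2,c\}\|\xi_m\|^2 .
\]
If we show a decrease inequality $W(\xi_{m+1})-W(\xi_m)\le -a_3\|\xi_m\|^2+a_4\|w_m\|^2$, then the whole ISS estimate follows verbatim from the proof of Theorem~\ref{thm:apid_iss}. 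That proof used only the sandwich bounds and the one-step inequality, and it gives $\rho=q^{1/2}$, $C_1=\sqrt{a_2/a_1}$ and $C_2=\sqrt{a_4a_2/(a_1\bar a_3)}$.

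\textbf{Gain-error contraction.} Since $\phi^\star\in\Phi$ and $\Pi_\Phi$ is nonexpansive, we have
\[
\|\widetilde\phi_{m+1}\|\le\|\widetilde\phi_m-\alpha_m g_m-\alpha_m r_m\| .
\]
Expand $\|\widetilde\phi_m-\alpha_m g_m\|^2$ using assumptions 2 and 3. This gives
\[
\|\widetilde\phi_m-\alpha_m g_m\|^2\le\big(1-2\alpha_m\mu+\alpha_m^2L^2\big)\|\widetilde\phi_m\|^2=:\kappa_m\|\widetilde\phi_m\|^2 ,
\]
and assumption 3 ensures $\kappa_m\le\kappa:=\max_{\alpha\in[\underline\alpha,\bar\alpha]}(1-2\alpha\mu+\alpha^2L^2)<1$. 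Note also $\kappa\ge 1-\mu^2/L^2\ge0$, since $\mu\le L$. Next apply Young's inequality with a parameter $\epsilon>0$ to absorb the perturbation $r_m$, which is a component of $w_m$:
\[
\|\widetilde\phi_{m+1}\|^2\le(1+\epsilon)\kappa\|\widetilde\phi_m\|^2+(1+\epsilon^{-1})\bar\alpha^2\|w_m\|^2 .
\]
Choose $\epsilon$ so that $(1+\epsilon)\kappa<1$. Writing $\theta:=1-(1+\epsilon)\kappa>0$, the gain error decreases by $\theta\|\widetilde\phi_m\|^2$ up to a $w_m$ term.

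\textbf{Combining.} Adding $c$ times this to assumption 1 gives
\[
\Delta W\le -b_3\|\chi_m\|^2-(c\theta-b_4)\|\widetilde\phi_m\|^2+\big(b_5+c(1+\epsilon^{-1})\bar\alpha^2\big)\|w_m\|^2 .
\]
Choosing $c>b_4/\theta$, say $c=2b_4/\theta$, yields the required inequality with $a_3=\min\{b_3,b_4\}$. This is the main step to check: the cross-coupling term $b_4\|\widetilde\phi_m\|^2$ from the plant must be dominated by the gain contraction. That is the standard small-gain/cascade weighting, and it works because the gain recursion as assumed does not feed back on $\chi_m$. If $g_m$ depended on $\chi_m$ we would need an extra cross term, but assumption 2 is stated uniformly, so this is not required.

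\textbf{Error bound.} The bound on $e_m$ follows as in Theorem~\ref{thm:apid_iss}. Write
\[
e_m=(r_{f,m}-T_4^{\rm des})+h(y^\star)-h(y(t_m))+(h(y(t_m))-\widehat z_m).
\]
The first and last terms are bounded by components of $\xi_m$ (the bias-estimator and filter states) and of $w_m$ (residual measurement and filtering error), provided the reference filter error is included in $\xi_m$ or $w_m$. The middle term is bounded by $L_h\|y(t_m)-y^\star\|$ by the Lipschitz assumption on $h$. Taking $\limsup$ then gives $C_e$. Finally, when $w_m\equiv0$, the inequality $W(\xi_{m+1})\le qW(\xi_m)$ gives exponential stability.
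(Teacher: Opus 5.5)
Your proposal is correct and matches the paper's proof step for step: the composite function $V_y(\chi_m)+c\|\widetilde\phi_m\|^2$, nonexpansiveness of $\Pi_\Phi$ with Young's inequality to obtain the contraction factor $1-2\alpha_m\mu+\alpha_m^2L^2<1$, a weight $c>b_4/\theta$ to dominate the cross-coupling term, and the Lipschitz bound on $h$ for $e_m$. The differences are only presentational: you give the constants explicitly and reuse the iteration from the proof of Theorem~\ref{thm:apid_iss}, whereas the paper writes that iteration out again.
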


\begin{proof}
The plant-regulation assumption gives
\[
V_y(\chi_{m+1})-V_y(\chi_m)
\le
-b_3\|\chi_m\|^2
+
b_4\|\widetilde\phi_m\|^2
+
b_5\|w_m\|^2 .
\]
The term \(b_4\|\widetilde\phi_m\|^2\) accounts for the fact that imperfect PID
gains perturb the plant-regulation dynamics.

Next consider the gain update. Since Euclidean projection onto a closed convex
set is nonexpansive and \(\phi^\star\in\Phi\),
\[
\|\widetilde\phi_{m+1}\|^2
\le
\left\|
\widetilde\phi_m-\alpha_m(g_m+r_m)
\right\|^2 .
\]
Using Young's inequality, for any \(\varepsilon>0\),
\[
\|\widetilde\phi_{m+1}\|^2
\le
(1+\varepsilon)
\left\|
\widetilde\phi_m-\alpha_m g_m
\right\|^2
+
\left(1+\frac1\varepsilon\right)\alpha_m^2\|r_m\|^2 .
\]
By the risk-gradient conditions,
\[
\left\|
\widetilde\phi_m-\alpha_m g_m
\right\|^2
\le
\left(1-2\alpha_m\mu+\alpha_m^2L^2\right)
\|\widetilde\phi_m\|^2 .
\]
Because
\[
\underline\alpha\le \alpha_m\le \bar\alpha<\frac{2\mu}{L^2},
\]
there exists \(\sigma_\phi\in(0,1)\), after choosing \(\varepsilon>0\) small
enough, such that
\[
\|\widetilde\phi_{m+1}\|^2-\|\widetilde\phi_m\|^2
\le
-\sigma_\phi\|\widetilde\phi_m\|^2
+
c_r\|r_m\|^2
\]
for some \(c_r>0\). Since \(r_m\) is a component of \(w_m\),
\[
\|\widetilde\phi_{m+1}\|^2-\|\widetilde\phi_m\|^2
\le
-\sigma_\phi\|\widetilde\phi_m\|^2
+
c_r\|w_m\|^2 .
\]

Define the composite Lyapunov function
\[
V(\xi_m)=V_y(\chi_m)+\kappa\|\widetilde\phi_m\|^2 .
\]
Combining the two inequalities gives
\begin{eqnarray*}
V(\xi_{m+1})-V(\xi_m)
&\le
-b_3\|\chi_m\|^2
-
(\kappa\sigma_\phi-b_4)\|\widetilde\phi_m\|^2 \\
&+
(b_5+\kappa c_r)\|w_m\|^2 .
\end{eqnarray*}
Choose
\[
\kappa>\frac{b_4}{\sigma_\phi}.
\]
Then there exist constants \(c_1,c_2>0\) such that
\[
V(\xi_{m+1})-V(\xi_m)
\le
-c_1\|\xi_m\|^2+c_2\|w_m\|^2 .
\]
Also, by the quadratic bounds on \(V_y\), there exist \(c_3,c_4>0\) such that
\[
c_3\|\xi_m\|^2
\le
V(\xi_m)
\le
c_4\|\xi_m\|^2 .
\]
Let
\[
\bar c_1=\min\{c_1,c_4/2\},
\qquad
q=1-\frac{\bar c_1}{c_4}\in(0,1).
\]
Then
\[
V(\xi_{m+1})
\le
qV(\xi_m)+c_2\|w_m\|^2 .
\]
Iterating gives
\[
V(\xi_m)
\le
q^mV(\xi_0)
+
c_2\sum_{j=0}^{m-1}q^{m-1-j}\|w_j\|^2 .
\]
Since
\[
\sum_{j=0}^{m-1}q^{m-1-j}\le \frac{1}{1-q},
\]
the quadratic bounds on \(V\) imply that there exist \(C_1,C_2>0\) and
\(\rho=\sqrt q\in(0,1)\) such that
\[
\|\xi_m\|
\le
C_1\rho^m\|\xi_0\|
+
C_2\sup_{0\le j\le m-1}\|w_j\|.
\]
If \(\bar w=\sup_{j\ge0}\|w_j\|<\infty\), taking the limit superior yields
\[
\limsup_{m\to\infty}\|\xi_m\|
\le
C_2\bar w .
\]

It remains to bound the sampled tracking error. Since
\[
e_m=r_{f,m}-\widehat z_m,
\]
write
\[
e_m=
\bigl(r_{f,m}-T_4^{\rm des}\bigr)
+
\bigl(h(y^\star)-h(y(t_m))\bigr)
-
\widetilde\eta_m .
\]
The filtered-reference deviation \(r_{f,m}-T_4^{\rm des}\) is included in the
sampled perturbation channel, or is zero after the reference transient. By the
local Lipschitz assumption on \(h\),
\[
|h(y^\star)-h(y(t_m))|
\le
L_h\|y(t_m)-y^\star\|.
\]
Since \(y(t_m)-y^\star\) is a component of \(\xi_m\) and
\(\widetilde\eta_m\) is a component of \(w_m\), the ISS estimate gives
\[
\limsup_{m\to\infty}|e_m|
\le
C_e\bar w
\]
for some \(C_e>0\). If \(w_m\equiv0\), then
\(V(\xi_{m+1})\le qV(\xi_m)\), which gives exponential convergence of
\(\xi_m\) to zero. This completes the proof.
\end{proof}
The assumptions involving \(g_m\) are the part that distinguishes RAPID from
APID. They encode local descent of the mean-CVaR scenario cost, while the
residual \(r_m\) captures the finite-scenario, finite-difference, and numerical
errors introduced by the implementable robust update.
\subsection{Robust Closed-Loop Validation for \texorpdfstring{$T_4$}{T4} Regulation}
\begin{figure}[t]
\centering
\includegraphics[width=\linewidth]{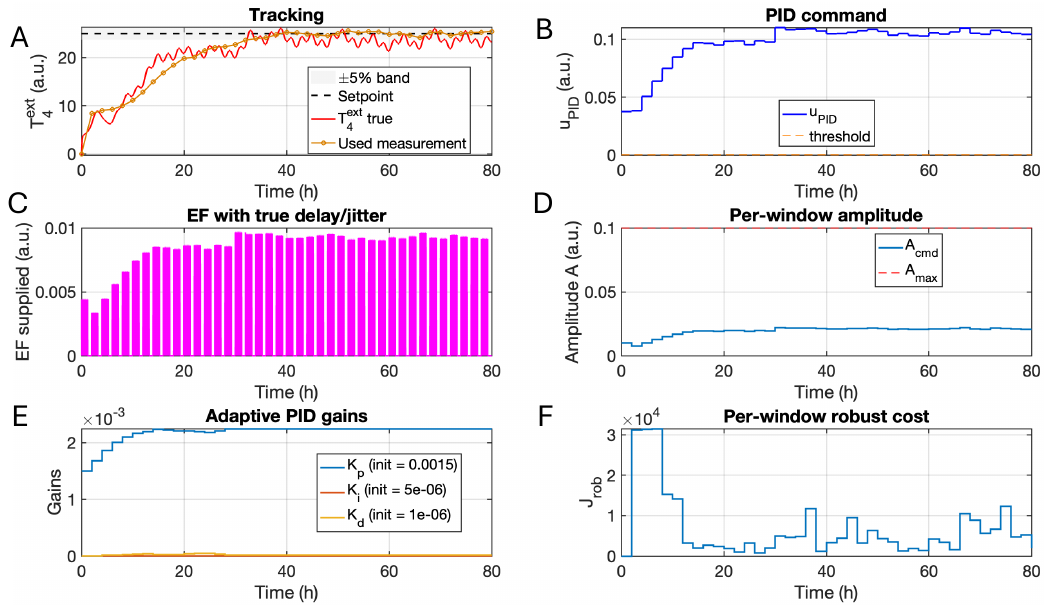}
\caption{Closed-loop performance of RAPID under five simultaneous perturbation classes for \(T_4^{\rm des}=25\) (a.u.). Panels show true and used-measurement tracking, PID command, perturbed EF supplied to the plant, commanded amplitude, adaptive gains, and robust cost.}
\label{fig:rapid_five_perturbations}
\end{figure}

\Cref{fig:rapid_five_perturbations} shows that RAPID achieves stable long-horizon tracking under simultaneous perturbations. The used measurement is smoother than the true trajectory and provides the controller-side signal for adaptation. The gain trajectories remain well behaved, with \(K_p\) showing the dominant adaptation while \(K_i\) and \(K_d\) remain comparatively small. The robust cost decreases overall, despite window-to-window fluctuations, suggesting that the adaptive law learns a safer control policy across perturbed scenarios.

\begin{remark}
In RAPID, the controller-side tracking error is formed using the filtered reference \(r_f(t)\), rather than the constant target directly. Consequently, derivative-on-measurement is no longer exactly equivalent to derivative-on-error, since \(\dot e(t)=\dot r_f(t)-\dot y^{\rm used}(t)\). The derivative term is nevertheless taken on the measurement channel to avoid derivative kick and reduce sensitivity to noisy, bias-corrected measurements.
\end{remark}

\begin{figure}[t]
\centering
\includegraphics[width=\linewidth]{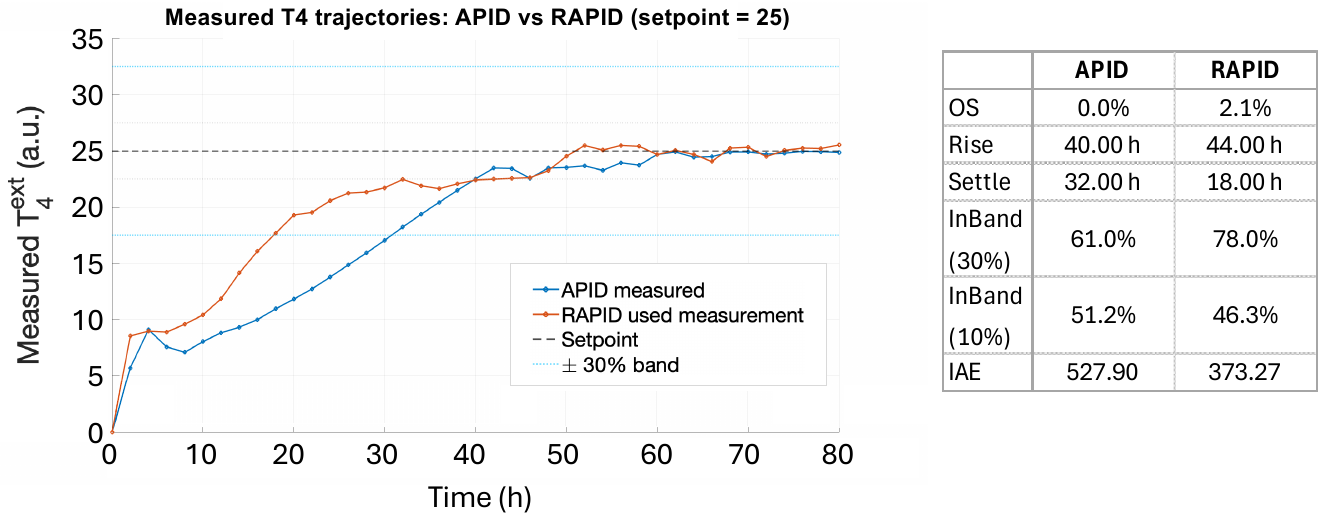}
\caption{Measured-output comparison of APID and RAPID for \(T_4^{\rm des}=25\) (a.u.).}
\label{fig:apid_vs_rapid_measured_compare}
\end{figure}

\begin{table}[H]
\caption{Measured-data comparison of APID and RAPID for \(T_4^{\rm des}=25\) (a.u.).}
\label{tab:apid-vs-rapid}
\centering
\normalsize
\setlength{\tabcolsep}{6pt}
\renewcommand{\arraystretch}{1.05}
\begin{tabular}{lcccccc}
\toprule
Controller & OS (\%) & Rise (h) & Settle (h) & In 30\% (\%) & In 10\% (\%) & IAE \\
\midrule
APID & 0.00 & 40 & 32 & 60.98 & 51.22 & 527.896 \\
RAPID & 2.10 & 44 & 18 & 78.05 & 46.34 & 373.265 \\
\bottomrule
\end{tabular}
\end{table}

\Cref{fig:apid_vs_rapid_measured_compare} and \Cref{tab:apid-vs-rapid} compare APID and RAPID for the same target value. APID reaches the target with no overshoot and achieves a slightly larger fraction of time inside the tighter \(\pm10\%\) band. RAPID settles faster, spends more time inside the broader \(\pm30\%\) band, and yields a substantially smaller measured-trajectory IAE, at the cost of a small overshoot.

\section{Conclusion and Discussion}
\label{sec:conclusion}
We demonstrate robust closed-loop regulation of extracellular \(T_4\) production in an electrically stimulated engineered thyroid-like cell system. Using a delayed control-oriented model and a burst-based actuator representation, APID and RAPID controllers are developed and tested for hormone regulation under sparse measurements and constrained stimulation. RAPID augments APID with filtered and bias-corrected measurements, explicit uncertainty modeling, and a scenario-based mean-CVaR gain update. In silico, this architecture regulates \(T_4\) across multiple targets and maintains tracking near the desired setpoint under simultaneous perturbations.

The results clarify the main tradeoff. Open-loop stimulation alone is insufficient for precise regulation, whereas feedback drives the system toward prescribed targets over a meaningful operating range. Although model predictive control is a natural alternative for delayed constrained systems, adaptive PID is used here for implementability: it uses measured output information, handles saturation and rate limits directly, and provides an interpretable feedback baseline for this bioelectronic application.

The study is computational, and the uncertainty model is stylized rather than learned from experiments. The ISS results should be interpreted as local sampled-data practical-stability statements: under the stated Lyapunov and bounded-disturbance assumptions, the sampled tracking error remains ultimately bounded, and the disturbance-free sampled equilibrium is exponentially stable. They do not constitute a global stability proof for the full nonlinear delayed biological plant with saturation and hysteretic switching. Future work will identify filtering, bias-correction, and uncertainty models from EF-driven cell-factory data, test transfer to in vitro platforms, and verify ISS conditions directly from identified sampled closed-loop models.

\section*{Code Availability}
The MATLAB implementation of the APID and RAPID algorithms is available at
\url{https://github.com/papridey/t4-apid-rapid-control}. The repository provides the controller code, shared plant model, burst-averaged actuator routines, and experiment scripts used to reproduce the setpoint-tracking and robustness studies.

\section*{Acknowledgments}
This research was funded, in part, by the U.S. Government through ARPA-H Agreement No. 140D042490005. The views and conclusions contained in this document are those of the authors and should not be interpreted as representing the official policies, either expressed or implied, of the United States Government. The authors thank Gary Fedder's team at Carnegie Mellon University for sharing
the timing diagram and for helpful discussions that informed the design of the actuator model. Generative AI tools were used to improve grammar and syntax in portions of the manuscript. The authors reviewed and edited all AI-assisted text and take full responsibility for the content.

\bibliographystyle{plain}
\bibliography{thyrocontr}

\section{Supplementary Materials}
This section collects additional numerical checks supporting the main results. We report the actuator time-resolution study, discuss the measurement/update interval, and provide a RAPID setpoint sweep. These studies document implementation robustness and reproducibility, but are not required for the main APID/RAPID controller construction.
\subsection{Actuator time-resolution study}
\label{subsec:time-resolution}

Because EF is delivered through short within-window bursts, the internal simulation step \(dt\) must be small enough to resolve the burst-averaged actuator signal. This numerical resolution is distinct from the control update period \(T_s\), which determines when measurements and controller updates occur. To assess sensitivity to \(dt\), we compared \(dt=0.1\,\mathrm{s}\) and \(dt=0.1/12\,\mathrm{s}\) with all controller, actuator, and model parameters fixed; see Fig.~\ref{fig:time_resolution}. The two resolutions produce closely matching measured \(T_4^{\mathrm{ext}}\) trajectories and similar EF profiles, indicating that the closed-loop behavior is not materially changed by this refinement, although \(dt\) should still be chosen fine enough to resolve burst timing.

\begin{figure}[t]
    \centering
    \includegraphics[width=\textwidth]{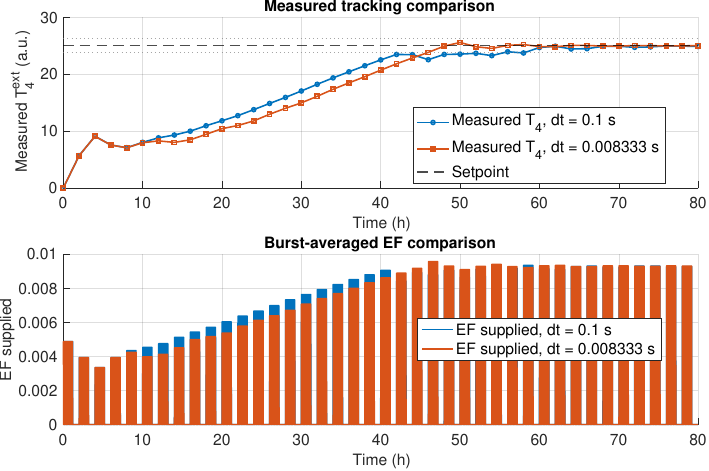}
    \caption{Time-resolution comparison for \(dt=0.1\,\mathrm{s}\) and \(dt=0.008333\,\mathrm{s}\). Top: measured \(T_4^{\mathrm{ext}}\) trajectories. Bottom: burst-averaged EF signals supplied to the plant.}
    \label{fig:time_resolution}
\end{figure}

\subsection{Effect of measurement and update interval}

The measurement/update interval is a separate sampled-data design parameter. It affects the sampled error, derivative estimate, and gain adaptation. In this delayed endocrine model, overly frequent updates can make the controller react to short-term fluctuations before the biological response to previous EF stimulation has developed. A longer update interval gives a smoother sampled signal and a more representative one-window adaptation cost. Thus, the update period should be matched to the slower biological response time scale rather than to the micro-pulse actuation scale.

\subsection{Additional closed-loop RAPID results across target setpoints}
As a supplementary validation study, we report an additional RAPID sweep over the target values \(T_4^{\mathrm{des}}\in\{15, 20,25,30,35,45\}\). The purpose of this appendix experiment is to examine how the same robust adaptive PID architecture behaves across multiple operating points under the same uncertainty model. For each setpoint, we display the used measurement, the true extracellular output, the commanded stimulation amplitude, and the per-window robust cost.
\begin{figure}[t]
\centering
\includegraphics[width=\textwidth]{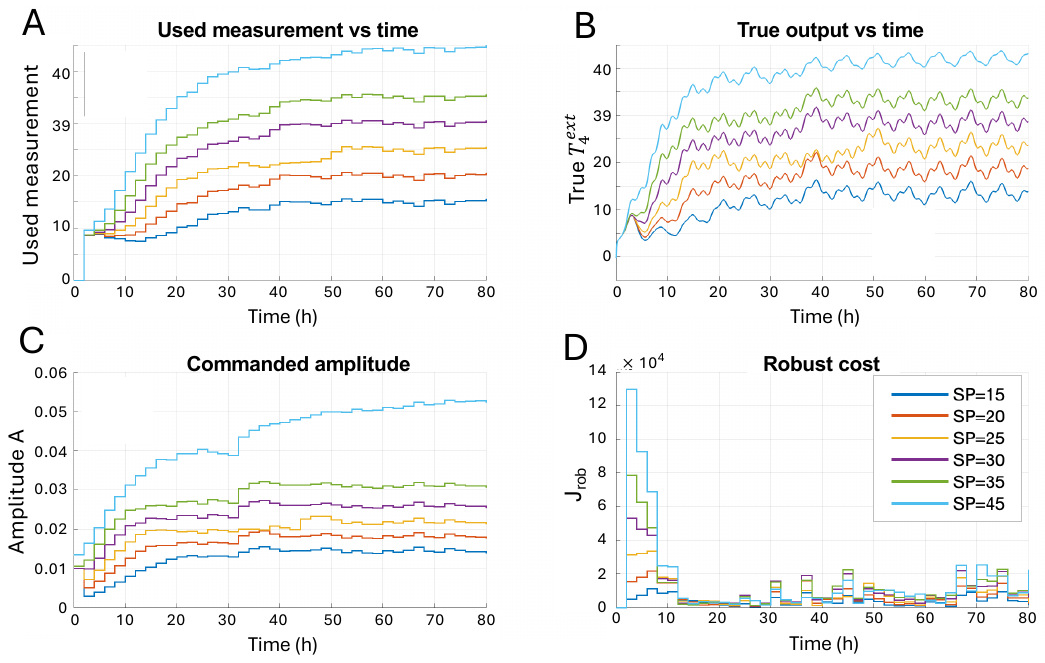}
\caption{Sweep of the RAPID controller over multiple target setpoints \(T_4^{\mathrm{des}}\in\{15, 20,25,30,35,45\}\). Top-left: controller-side used measurement versus time. Top-right: true extracellular output trajectory \(T_4^{\mathrm{ext}}\) versus time. Bottom-left: commanded stimulation amplitude \(A_{\mathrm{cmd}}\). Bottom-right: per-window robust predictive cost \(J_{\mathrm{rob}}\). Across all tested setpoints, the controller increases stimulation in a graded manner, drives the measured and true outputs toward the desired operating range, and then maintains regulation with bounded oscillations under the robust scenario-based adaptation law.}
\label{fig:appendix_rapid_sweep_setpoints}
\end{figure}
\Cref{fig:appendix_rapid_sweep_setpoints} illustrates the behavior of RAPID over a range of prescribed target values. For each setpoint, the controller-side used measurement approaches the desired level, the true output follows with delayed oscillatory transients, and the commanded amplitude adapts to the stimulation level required for sustained regulation. The corresponding robust cost decreases after the initial adaptation phase, indicating improved one-window predictive performance as the closed loop settles. These additional simulations support that the same robust adaptive PID architecture remains effective across multiple target operating points. 

\end{document}